\newtheorem{theorem}{Theorem}
\newtheorem{definition}[theorem]{Definition}
\newtheorem{remark}[theorem]{Remark}
\newtheorem{condition}[theorem]{Condition}
\newcommand{\R}{\mathbb{R}}
\newcommand{\N}{\mathbb{N}}
\newcommand{\PP}{\mathbb{P}}
\newcommand{\E}{\mathbb{E}}
\newcommand{\cd}{\overset{d}{\longrightarrow}}
\newlength{\dhatheight}
\definecolor{burgundy}{rgb}{0.5, 0.0, 0.13}
\definecolor{darkgreen}{rgb}{0.15, 0.3, 0.1}
\title{Models for temporal clustering of extreme events with applications to mid-latitude winter cyclones}
\author[1]{Christina Mathieu}
\author[1]{Merle Mendel}
\author[1,2,3]{Katharina Hees}
\author[1]{Roland Fried\thanks{Corresponding author: fried@statistik.tu-dortmund.de}}
\affil[1]{TU Dortmund University, Department of Statistics, 44221 Dortmund, Germany}
\affil[2]{University of Siegen, Mathematics Department, 57068 Siegen, Germany}
\affil[3]{Paul-Ehrlich-Institut, Section of Biometrics, 63225 Langen, Germany}
\begin{document}
	
\maketitle

\doublespacing

	
\begin{abstract}
The occurrence of extreme events like heavy precipitation or storms at a certain location  often shows a clustering behaviour and is thus not described well by a Poisson process. We construct a general model for the inter-exceedance times in between extreme events which combines different candidate models for such behaviour. One of them is formulated in terms of clusters of dependent events with exponential inter-exceedance times in between clusters, while the other assumes independent events separated by heavy-tailed inter-exceedance times. 
We propose a modification of the Cramér-von Mises distance for fitting the combined model. The resulting estimator turns out to be competitive with specialised estimators if the data stem from one of the two submodels. Our modelling approach thus allows us to distinguish these different data generating mechanisms without the need of a-priori model selection. 
An application to mid-latitude winter cyclones illustrates the usefulness of our work as the combination of the two mechanisms improves the  descriptions of such occurrences at many places.   
\end{abstract}

\textbf{Keywords:} Cramér-von Mises distance, Hawkes process, heavy-tailed waiting times, log-Gaussian Cox process, mixture distribution, peaks-over-threshold

\setlength{\parindent}{0pt}

\section{Introduction}


In metereology and climatology there is large interest in the recurrence times of extreme weather events. For example, mid-latitude cyclones strongly affect the weather conditions, such as temperature, wind, precipitation and cloud cover, and are therefore of great interest (\cite{Dacre2020}). 
Our work is within the peaks-over-threshold framework (see e.g. \cite{Coles2001}) as we model the return times of extreme events with magnitudes which exceed a given threshold. Such extreme events are called \emph{exceedances} and the time between two consecutive exceedances is called \emph{inter-exceedance time} (IET). The exceedance times are traditionally modelled by a Poisson process with i.i.d. exponentially distributed IETs. This can be justified by mathematical arguments: If the event magnitudes are i.i.d. and events are measured in constant time intervals (i.e., equidistant observation times) or in i.i.d. random time intervals following a distribution with existing first moment, the exceedances form a Poisson process asymptotically (e.g., \cite{shanthikumar1983}; \cite{Gut1999}).

In many applications the exceedances show a clustering behaviour with more very short and more very long IETs than expected for a Poisson process. In particular, several studies indicate temporal clustering of mid-latitude cyclones on the west coast of Europe, see \citet{Mailier2006},
\citet{Blender2015}, or
\citet{Dacre2020}. Therefore, we consider two different relaxations of the classical modelling assumption which can describe temporal clustering.

First, if the event magnitudes are not independent but only stationary and a mixing condition that limits long-range dependence is fulfilled, the exceedances form a compound Poisson process asymptotically (\cite{Hsing1988}) where the IETs follow a mixture distribution of the Dirac measure at zero and an exponential distribution (\cite{Ferro2003}). Exceedances then occur in clusters that are asymptotically independent with exponentially distributed recurrence times between subsequent clusters. Approaches to model and estimate temporally clustered extreme events under such assumptions can, e.g., be found in \cite{Fawcett2006,Fawcett2007,Fawcett2012}.
If, in contrast, the event magnitudes are independent but the waiting times between subsequent events are heavy-tailed with infinite mean, then the exceedances form a fractional Poisson process asymptotically (\cite{Laskin2003}; \cite{Meerschaert2011}) with Mittag-Leffler distributed IETs (\cite{Hees2021}). 
\citet{Blender2015} suggest application of this framework for 
modelling mid-latitude cyclones.
Both models for IETs (dependent event magnitudes or heavy-tailed waiting times between subsequent events) describe a temporal clustering behaviour of the extreme events, with short time intervals containing several exceedances followed by long time intervals without any exceedance, but the underlying mechanisms differ widely. 

 \citet{Dissanayake2021} find  the fractional Poisson process to be not flexible enough for modelling the clustering behaviour of significant waves at the Liverpool Bay, UK. The Mittag-Leffler distribution has difficulties describing both the many very short and some very long IETs occurring there. They deduce a need for new methods based on models with a sound mathematical underpinning.    

The goal of this work is to fill this gap and to develop new theory for the behaviour of the IETs when the two conditions, stationary event magnitudes and heavy-tailed waiting times between subsequent events, are met jointly. Such scenarios result in fractional compound Poisson processes with IETs following a mixture distribution with a Mittag-Leffler instead of an exponential component. By considering both mechanisms simultaneously, we get more flexibility for modelling temporal clustering with many very short and some very long IETs, and we do not need to decide in advance which of these mechanisms causes the clustering behaviour. 
In order to make the new model applicable in practice, we need a reliable estimation method for the three model parameters. However, finding such a method is challenging due to the characteristics of our framework and the mixture distribution.
We suggest the minimum distance approach based on a modification of the Cramér-von Mises distance for this task.

The remainder of this work is organized as follows. 
Section \ref{sec:cyclones} introduces the mid-latitude cyclone data on the northern hemisphere that we explore for illustrating our modelling approach. 
Section \ref{sec:probabilisticmodel} introduces probabilistic models that lead to temporal clustering behaviour and combines them to a general model. Section \ref{sec:statinf} discusses some difficulties of finding a suitable estimator for the combined model and suggests a minimum distance estimator based on a modification of the CM distance for this task.
Section \ref{sec:simStudy} evaluates the estimators in a simulation study. In Section~\ref{sec:simStudy2}, we compare the model class introduced in this paper with other clustering models in a further simulation study and discuss their advantages and limitations.
In Section \ref{sec:example}, we apply our modelling approach to the mid-latitude cyclones of Section \ref{sec:cyclones}.
Finally, we close with some conclusions in Section \ref{sec:conclusion}. Proofs and further details are deferred to the Supplementary Material.

\section{Mid-latitude cyclones on the northern hemisphere}
\label{sec:cyclones}

In meteorology, the position of cyclones in the northern hemisphere are typically identified by the maxima of the relative vorticity or the minima of mean sea-level pressure in a given area at a certain time (e.g. \cite{Neu2013}). \citet{Mailier2006} analysed the temporal clustering of mid-latitude cyclones by calculating the variance-to-mean ratio as it measures the degree of deviation from a Poisson point-process (PP) with IETs following an exponential distribution. 

Their results indicate serial clustering on the west coast of Europe, where the exit region of the North Atlantic storm tracks is located, while they occur more regularly in the entry region on the east coast of North America. This pattern has been reproduced in other studies (\cite{Dacre2020}).
\citet{Blender2015} suggest the application of fractional Poisson processes (FPP) to model the clustering behaviour, with IETs following a Mittag-Leffler instead of an exponential distribution. 

We analyse relative vorticity at 850 hPa pressure level of the ERA5 reanalysis data \citep{ERA52023} provided by the European Centre for Medium-Range Weather Forecast (ECMWF) from Winter 1940/41 to 2022/23 including 6h time steps with a horizontal resolution of $1^\circ$ on the North Atlantic Area (30°N - 60°N, 20°E - 80°W). We only use winter data from December, January and February (DJF) due to the different climate and weather dynamics in the other seasons of the year. This is a standard approach in meteorological studies (\cite{Blender2015}; \cite{Neu2013}) and justifies the assumption of (approximately) stationary event magnitudes. 
Each exceedance is associated with the waiting time until the next exceedance, even if it extends beyond the winter period.  This has the advantage that the IETs are not artificially restricted by the end of the season and can last longer than 90 days.
We use the $99\%$ quantile (calculated separately at each grid point) as the respective threshold and consider higher magnitudes as extreme. 

For illustration, we consider three locations on the exit region of the North Atlantic storm track  in detail.
Location A is 3°E 46°N (in the interior of France), Location B 5°E 53°N (west coast of U.K.) and Location C 5°W 52°N (west coast of the Netherlands).
\begin{figure}
\includegraphics[width=1\textwidth]{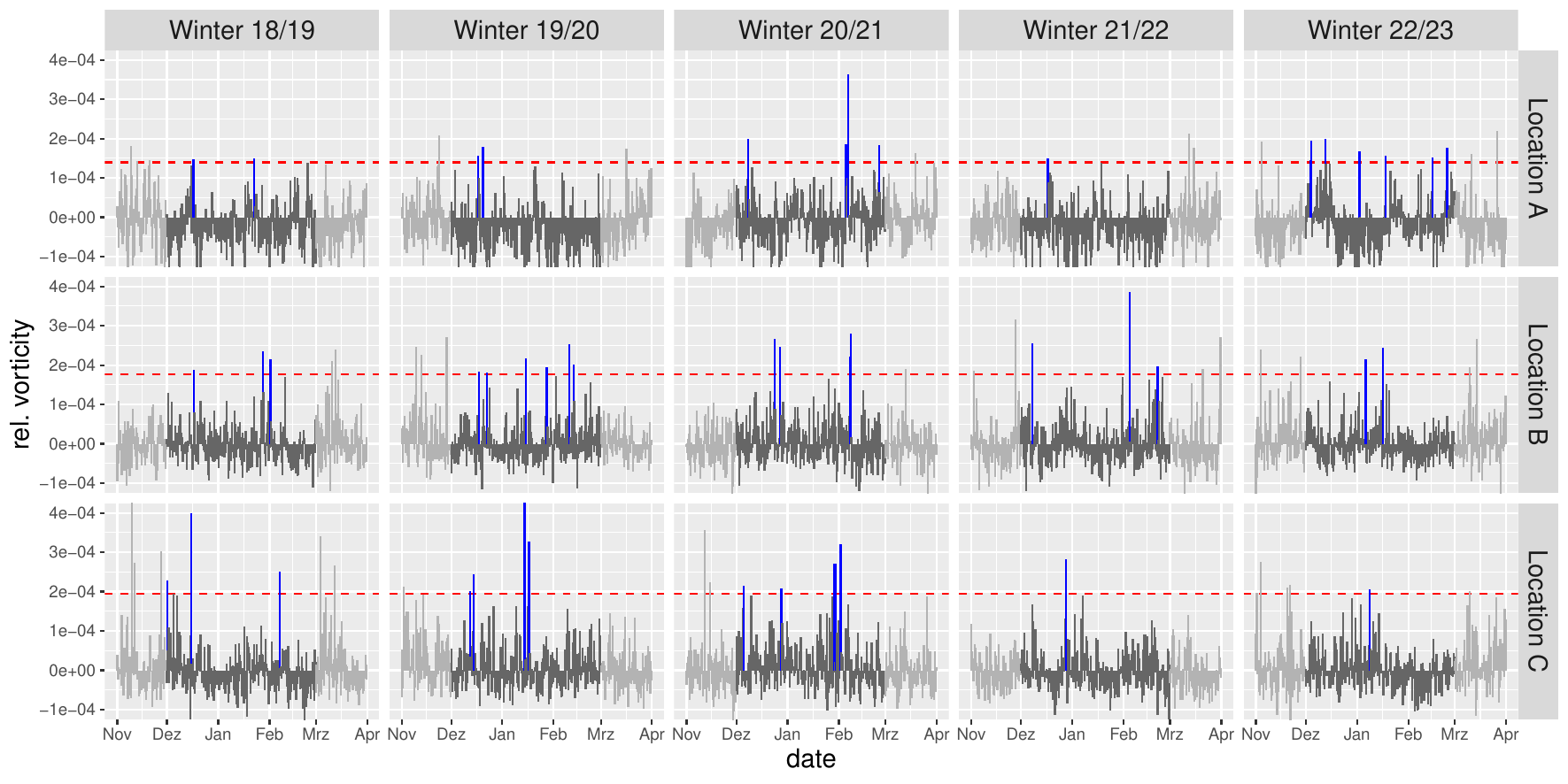}
\caption{Time series of five winters for the three locations.}
\label{fig:exm-5}
\end{figure}
Figure \ref{fig:exm-5} plots the magnitudes of the relative vorticities observed at these three locations for five winters. 
The extreme events crossing the $99\%$ quantile as threshold value show clustering behaviour at all three locations. 

\begin{figure}[h]
     \centering
     \includegraphics[width=\textwidth]{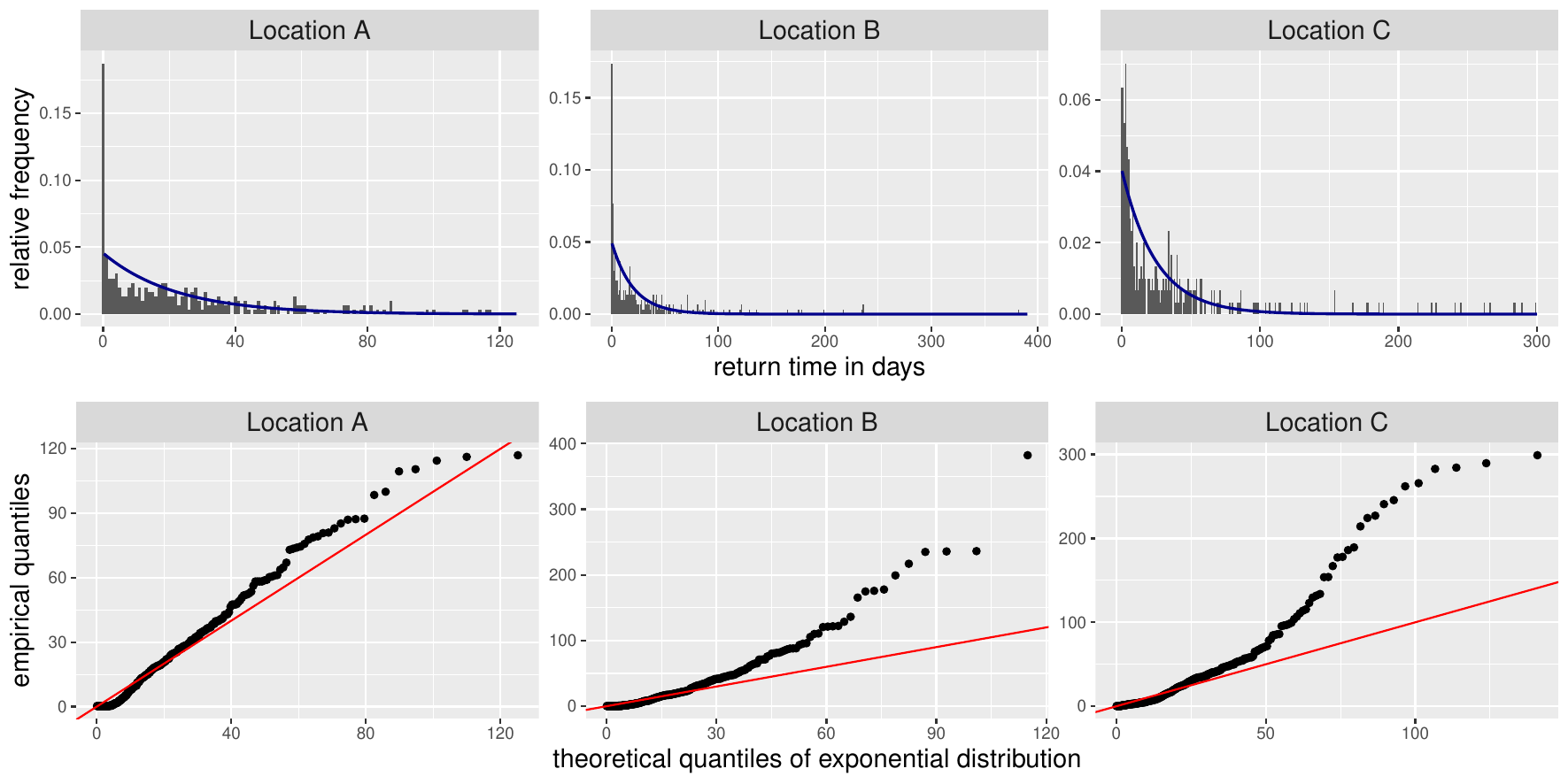}
     \caption{Histogram with bar width of one day (top) and QQ plots (bottom) of the IETs of Location A, $B$ and $C$ with densities and theoretical quantiles, respectively, fitted using the exponential distribution.}
    \label{fig:exm-4}
\end{figure}

Figure \ref{fig:exm-4} shows  histograms  and QQ plots of the IETs for the three locations with densities and theoretical quantiles, respectively, fitted using the exponential distribution. 
We observe that all three locations are poorly described by an exponential distribution. Furthermore, the locations differ from each other. Compared to the exponential distribution, all three locations have a higher probability mass on very small values, although this mass is spread over a larger range at Location C than at Locations A and B. Additionally, the figure suggests that the right tail of the distributions for Locations B and C is heavier than what the exponential distribution would suggest.

In the following we discuss different models for (clustered) exceedances and their IETs which arise from different asymptotic considerations. 

After presenting the mathematical background of these models and the estimation method for the parameters of the mixed distribution in the next three sections, we will return to the analysis of the winter cyclones in Section \ref{sec:example} in order to fit the model to our data.

\section{Asymptotic models for inter-exceedance times}
\label{sec:probabilisticmodel}

Let $(X_n,T_n)_{n \in \N}$ be a marked point process with $T_n$ being the occurrence time and $X_n$ the event magnitude of the $n$-th event. The waiting times between two consecutive events are defined as $W_n = T_n -T_{n-1}$, $n  \in \N$.
We assume that the waiting times $(W_n)_{n \in \N}$ and magnitudes $(X_n)_{n \in \N}$ are independent of each other, and that $X_0 > u$ and $T_0 \equiv 0$ for a given threshold $u$.
Our interest is in the inter-exceedance time of such extreme events:
\begin{align}
	T(u) := T_{\tau(u)} = \sum\limits_{i = 1}^{\tau(u)} W_i ~~~\text{(given $X_0 > u)$}
\end{align}
with stopping time $\tau(u):=\min\{k \ge 1 \mid X_k > u\}$. This means that $X_{\tau(u)}$ is the first magnitude after $X_0$ that exceeds the threshold $u$.

A classical approach to modelling IETs is given by Poisson processes (PP) with i.i.d. exponentially distributed IETs and scale parameter $\sigma>0$, which cannot describe clustering behaviour. Replacing the exponential by the more heavy-tailed Mittag-Leffler distribution with an additional parameter $\beta \in (0,1)$ as suggested by \citet{Blender2015} and \citet{Hees2021} leads to fractional Poisson processes (FPP). The exponential distribution corresponds to the limiting case $\beta \to 1$. Another approach, which is quite popular in the extreme value community, is given by compound Poisson processes (CPP) with IETs following asymptotically a mixture distribution with an exponential component and a point mass $1- \theta$ at 0 after appropriate rescaling \citep{Ferro2003}. 
In the following, we describe the mathematical conditions that lead to these asymptotic results.

Afterwards, we suggest a combination of these approaches called fractional compound Poisson process (FCPP), namely a mixture distribution with a Mittag-Leffler component with parameters $\beta$ and $\sigma$ and a point mass $1-\theta$ at 0. Besides the scale parameter $\sigma$, FCPPs have two further parameters $\theta$ and $\beta$. FPPs, CPPs and PPs correspond to the special cases $\theta=1$, $\beta=1$, or $\theta=\beta=1$.
Using FCPP we have a more flexible model and do not need to decide in advance which submodel fits best.

\subsection*{Background}

For the asymptotic theory, we require that the event magnitudes $X_n$ are identically distributed random variables (r.v.) with the same distribution $\PP^X$ as a r.v. $X$ that belongs to the max-domain of attraction of some non-degenerate distribution $\widetilde{G}$. 

We further consider that the marks $(X_n)_{n \in \N}$ form a (strictly) stationary sequence which fulfills the following mixing condition that limits the long-range dependency:
\begin{condition}
	\label{con:1}
	Let $M(J):=\max \{X_j \mid j \in J \}$  and 
	$\mathcal{I}_{j,l}(u_n):= \{ \{M(I) \le u_n \} \mid I \subset \{j,\dots,l\} \}$. For all $A_1 \in \mathcal{I}_{1,l}(u_n), A_2 \in \mathcal{I}_{l+s,n}(u_n)$ and $1\leq l \leq n-s$,
	$$ |P(A_1 \cap A_2) - P(A_1)P(A_2)| \leq \alpha(n,s) $$
	and $\alpha(n,s_n) \rightarrow 0$ as $n\rightarrow \infty$ for some positive integer sequence $s_n$ such that $s_n=o(n)$. This mixing condition is called $D(u_n)$ condition.
\end{condition}
Condition \ref{con:1} states that two disjoint maxima that are separated by a time lag $s_n$ are approximately stochastically independent as $n \to \infty$. 
We further assume that 
\begin{align}
	\label{eq:3}
	\lim\limits_{n\to \infty} \PP\left( \dfrac{M_n-d_n}{a_n} \le x \right) = G(x) = \widetilde{G}^\theta(x),
\end{align}
for sequences $d_n\in \R$ and $a_n>0$ and
a constant $\theta \in (0,1]$ called \emph{extremal index} of $(X_n)_{n \in \N}$ with $M_n := \max\{X_1,\dots, X_n\}$.
Hereby, $G$ is the c.d.f. of a GEV distribution; see e.g. \citet[chapter 10]{Beirlant2004} for more information on this. 
One can show that for each $\nu \in (0,\infty)$ there is a sequence $u_n$ of thresholds such that
\begin{align}
	& n \cdot \PP(X>u_n) \to \nu ~~~ \text{and} \\
	& \PP(M_n \le u_n) \to \exp(-\theta \nu)
\end{align}
as $n \to \infty$, see e.g. \citet{Leadbetter1983}.

\citet{Ferro2003} derived that
\begin{align}
    \PP(p(u_n)\tau(u_n) > t) \overset{d}{\longrightarrow} \theta \exp(-\theta t) ~~ \text{as } n \to \infty
\end{align}
where $p(u_n) := \PP(X > u_n)$. They used a slightly stronger mixing condition than Condition \ref{con:1}, but as stated in \citet{Beirlant2004}, their result also holds under $D(u_n)$.

In case that the marks $(X_n)_{n \in \N}$ occur in equidistant time intervals, i.e., $T_n = T_{n-1} + 1$ and $W_n \equiv 1 ~~ \forall \, n \in \N$, it holds $T(u) = \tau(u)$ and thus
\begin{align}
	\label{eq:8}
		p(u) T(u) = p(u) \tau(u) \overset{d}{\longrightarrow} T_\theta \text{ as } u \uparrow x_R.
	\end{align}
Hereby, $x_R$ is the right endpoint of the distribution of $X$, and $T_{\theta}$ is a r.v. distributed according to the mixture distribution
\begin{align} 
\PP_\theta := (1-\theta) \cdot \varepsilon_0 + \theta \cdot \text{Exp}(\theta),
\end{align}
with $\varepsilon_0$ being the Dirac measure in $0$ and $\text{Exp}(\theta)$ the exponential distribution with rate $\theta$.
It means that instead of a pure exponential distribution, as is the case for $\theta = 1$, the return times asymptotically follow a mixture distribution with the Dirac measure in zero and the exponential distribution as components.
Thus, the extremal index $\theta$ is related to the times between two exceedances and is responsible for the clustering behaviour. In the limit the IET is either zero, representing the times within a cluster, or exponentially distributed, representing the time between subsequent clusters. Therefore it forms a compound Poisson Process (see e.g. \cite[chapter 10]{Beirlant2004}).

The asymptotics in equation (\ref{eq:8}) can be extended to i.i.d. waiting times $(W_n)_{n \in \N}$ with finite expected value:

\begin{theorem} 
	\label{theorem_FS}
	Assume that the event magnitudes $(X_n)_{n \in \N}$ fulfill assumption (\ref{eq:3}) for some $\theta > 0$ and let the waiting times $(W_n)_{n \in \N}$ be i.i.d. with $\E(W_n) = 1$ for all $n \in \N$. Then
	\begin{align}
	\label{eq:5}
		p(u) T(u) \overset{d}{\longrightarrow} T_\theta \text{ as } u \uparrow x_R,
	\end{align}
	where $p(u):= \PP(X > u)$, $x_R$ is the right endpoint of the distribution of $X$, and $T_{\theta}$ is a r.v. distributed according to the mixture distribution 
    \begin{align} \PP_\theta := (1-\theta) \cdot \varepsilon_0 + \theta \cdot \emph{Exp}(\theta),\end{align}
	with $\varepsilon_0$ being the Dirac measure in $0$ and $\text{Exp}(\theta)$ the exponential distribution with rate $\theta$. 
\end{theorem}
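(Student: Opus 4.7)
The plan is to reduce the claim to the equidistant case already recorded in (\ref{eq:8}) by showing that the random waiting times do not affect the asymptotic behaviour on average. I would write
$$ p(u)\,T(u) \;=\; p(u)\,\tau(u) \cdot \frac{1}{\tau(u)}\sum_{i=1}^{\tau(u)} W_i , $$
apply the Ferro--Segers limit to the first factor, a random-index strong law to the second factor, and conclude via Slutsky's theorem.

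For the first factor, the Ferro--Segers result cited just before the theorem gives $p(u)\tau(u)\overset{d}{\longrightarrow}T_\theta$ as $u\uparrow x_R$ under assumption (\ref{eq:3}); this statement depends only on the magnitude sequence $(X_n)_{n\in\N}$ and is exactly what is used in (\ref{eq:8}), so it is available without any further work. For the second factor I would first observe that $\tau(u)\to\infty$ almost surely as $u\uparrow x_R$: the map $u\mapsto\tau(u)$ is non-decreasing in $u$ for every realisation of $(X_n)_{n\in\N}$, and for each fixed $m\in\N$ one has $\PP(\tau(u)\le m)=\PP(M_m>u)\to 0$ since $M_m<x_R$ almost surely. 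Combining this with Kolmogorov's SLLN for $(W_n)_{n\in\N}$, which gives $n^{-1}\sum_{i=1}^n W_i\to 1$ almost surely, yields the random-index law
$$ \frac{1}{\tau(u)}\sum_{i=1}^{\tau(u)} W_i \;\longrightarrow\; 1 \qquad\text{almost surely as } u\uparrow x_R, $$
since on the intersection of the two almost-sure events one may substitute $\tau(u)$ for the deterministic index along the a.s.\ convergent sequence $S_k/k\to 1$.

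Slutsky's theorem then applies directly to the factorisation in the first display: the first factor converges in distribution to $T_\theta$, the second converges in probability to the constant $1$, so the product converges in distribution to $T_\theta$, which is the claim. The only point that calls for a moment of care is that $T_\theta$ has an atom at $0$, but this is handled automatically by Slutsky's product form because the multiplicative perturbation tends to the constant $1$ and not to a random quantity. The main obstacle, if it can be called one, is choosing a multiplicative rather than an additive decomposition of $p(u)T(u)$; once the factorisation above is identified, the rest is a routine invocation of classical limit theorems, and no further appeal to the independence of $(X_n)_{n\in\N}$ and $(W_n)_{n\in\N}$ is needed beyond what is already built into the well-definedness of $\tau(u)$.
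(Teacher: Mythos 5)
Your proof is correct and follows essentially the same route as the paper: the same factorisation $p(u)T(u)=p(u)\tau(u)\cdot\frac{1}{\tau(u)}\sum_{i=1}^{\tau(u)}W_i$, the Ferro--Segers limit for the first factor, a random-index law of large numbers for the second, and Slutsky's theorem. The only (harmless) difference is that you obtain $\tau(u)\to\infty$ almost surely from the monotonicity of $u\mapsto\tau(u)$ and then get the random-index LLN by direct substitution, whereas the paper shows $\tau(u)\to\infty$ in probability via a union bound and invokes Richter's theorem together with Kolmogorov's SLLN.
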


Another mechanism that leads to temporal clustering behaviour are heavy-tailed distributed waiting times, with heavy-tailed meaning that the tail probability is regularly varying with index $-\alpha$ for some $\alpha > 0$. This means that $\PP(W_1 > x)$ $ = x^{-\alpha} L(x)$
equals a power of $x$ up to a 
slowly varying function $L(x)$ which is asymptotically constant, 
\begin{align}
    \lim\limits_{x \to \infty} \frac{L(\lambda x)}{L(x)} = 1 ~~ \text{for all~~} \lambda > 0.
\end{align}
Then the moment $\E(W_1^\gamma) < \infty$ exists if $\gamma < \alpha$, while  $\E(W_1^\gamma) = \infty$ if $\gamma > \alpha$.
A prominent example of a heavy-tailed distribution is the Pareto distribution, which as opposed to the exponential distribution has polynomial tails.

For $\alpha > 1$ the mean of $W_1$ is finite and thus
Theorem \ref{theorem_FS} applies.
For $0 < \alpha < 1$ the waiting time distribution does not have a finite mean. 
In case of an i.i.d. sequence of magnitudes $(X_n)_{n \in \N}$, i.e., extremal index $\theta  = 1$, \citet{Hees2021} showed 
\begin{align} \dfrac{T(u)}{b(1/p(u))} \cd T_{\alpha} \text{ as } u \uparrow x_R \end{align}
under this condition, where $T_{\alpha}$ is a Mittag-Leffler distributed r.v. corresponding to a fractional Poisson Process.

A positive r.v. $T_\beta$ is Mittag-Leffler distributed with parameter $\beta \in (0,1]$ if it has the Laplace transform
\begin{align}
	\mathcal{L}_{}(s)
 := \E(\exp(-s T_\beta))= \frac{1}{1+s^\beta}. 
\end{align}
We write ML$(\beta,\sigma)$ for the distribution of $~\sigma \, T_\beta$, where $\sigma > 0$ is a scale parameter. 
 For $\beta < 1$, the Mittag-Leffler distribution is heavy-tailed with index $\alpha = \beta$ and thus has infinite mean. The exponential distribution is a limiting case, as ML$(1,\sigma) = \text{Exp}(1 / \sigma)$ with mean $\sigma$.
For more information on the Mittag-Leffler distribution see  \citet{Haubold2011}, and for algorithms see the R package \texttt{MittagLeffleR} \citep{Gill2017}.

\subsection*{A general asymptotic model}

Now we bring the two clustering mechanisms together by considering stationary magnitudes $(X_n)_{n \in \N}$ with extremal index $\theta > 0$ and heavy-tailed waiting times $(W_n)_{n \in \N}$ simultaneously.
The following novel theorem states that the limiting distribution of $T(u)$ changes from a mixture distribution with an exponential component or a Mittag-Leffler distribution, respectively, to a mixture with a Mittag-Leffler component. The resulting renewal process changes from a compound Poisson process or a fractional Poisson process, respectively, to a fractional compound Poisson process. See e.g. \citet{Laskin2003} for more information on this model class.

\begin{theorem}
	\label{def:Tu2}
	Assume that the event magnitudes $(X_n)_{n \in \N}$ fulfill  (\ref{eq:3}) for some $\theta > 0$ and let the waiting times $(W_n)_{n \in \N}$ be regularly varying with index $\alpha \in (0,1)$.
	Then, 
	\begin{align}
		\dfrac{T(u)}{b(1/p(u))} \cd T_{\beta,\theta} \text{ as } u \uparrow x_R,
	\end{align}
	where $x_R$ is the right endpoint of the distribution of $X$ and $T_{\beta,\theta}$ is a random variable distributed according to the mixture distribution 
	\begin{align}\PP_{\beta,\theta}:=(1-\theta) \cdot \varepsilon_0 + \theta \cdot \text{ML}(\beta,\theta^{-1/\beta}),\end{align}
	with tail parameter $\beta = \alpha$ and $\varepsilon_0$ being the Dirac measure in 0.
\end{theorem}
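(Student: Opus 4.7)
The plan is to decompose
\begin{align*}
\frac{T(u)}{b(1/p(u))} = \frac{W_1 + \cdots + W_{\tau(u)}}{b(\tau(u))} \cdot \frac{b(\tau(u))}{b(1/p(u))}
\end{align*}
and to establish the joint convergence
\begin{align*}
\left(\frac{W_1+\cdots+W_{\tau(u)}}{b(\tau(u))},\ \frac{b(\tau(u))}{b(1/p(u))}\right) \cd (D_\alpha,\ T_\theta^{1/\alpha})
\end{align*}
with $D_\alpha$ and $T_\theta$ independent as $u \uparrow x_R$. The continuous mapping theorem then gives $T(u)/b(1/p(u)) \cd D_\alpha \cdot T_\theta^{1/\alpha}$, and it remains to identify this limit's law with $\PP_{\beta,\theta}$ for $\beta = \alpha$.

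For the first factor, the result of \citet{Ferro2003} recalled before Theorem \ref{theorem_FS} yields $p(u)\tau(u) \cd T_\theta$, while (\ref{eq:10}) yields $(W_1+\cdots+W_n)/b(n) \cd D_\alpha$. Because $(W_n)$ is independent of $(X_n)$ and hence of $\tau(u)$, and because $\tau(u) \to \infty$ in probability (by the bound used in the proof of Theorem \ref{theorem_FS}), a conditioning argument on $\tau(u)$ gives $(W_1+\cdots+W_{\tau(u)})/b(\tau(u)) \cd D_\alpha$ jointly with $p(u)\tau(u) \cd T_\theta$, and the independence of the two underlying sequences transfers to independence of the limits. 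For the second factor, regular variation of $b$ with index $1/\alpha$ combined with $\tau(u)/(1/p(u)) = p(u)\tau(u) \cd T_\theta$ gives $b(\tau(u))/b(1/p(u)) \cd T_\theta^{1/\alpha}$ via the uniform convergence theorem for regularly varying functions; on the event $\{T_\theta = 0\}$, Potter bounds combined with $\tau(u)\to\infty$ force the ratio to zero.

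To identify $D_\alpha \cdot T_\theta^{1/\alpha}$ with $\PP_{\beta,\theta}$, first observe that on $\{T_\theta = 0\}$, a set of probability $1-\theta$, the product vanishes, producing the atom at $0$. Conditional on $\{T_\theta > 0\}$, write $T_\theta = E/\theta$ with $E \sim \text{Exp}(1)$ independent of $D_\alpha$; then, conditioning on $E$ and using $\mathcal{L}_{D_\alpha}(s) = \exp(-s^\alpha)$,
\begin{align*}
\E\!\left[\exp\!\left(-s\,\theta^{-1/\alpha} D_\alpha E^{1/\alpha}\right)\right]
= \E\!\left[\exp\!\left(-(s\theta^{-1/\alpha})^\alpha E\right)\right]
= \frac{\theta}{\theta + s^\alpha},
\end{align*}
which is the Laplace transform of $\text{ML}(\alpha, \theta^{-1/\alpha})$. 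Reassembling the two cases gives the claimed mixture.

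The main obstacle is the joint convergence of the two factors with \emph{independent} limits: one cannot simply cite Anscombe's theorem as a black box, but the standing independence of $(W_n)$ and $(X_n)$ lets a direct conditioning argument go through. A secondary subtlety is the behaviour of $b(\tau(u))/b(1/p(u))$ on $\{T_\theta = 0\}$, where $p(u)\tau(u) \to 0$ while $\tau(u) \to \infty$; this requires the uniform Potter bounds rather than the bare definition of regular variation.
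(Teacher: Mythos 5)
Your proof is correct, and the limit you identify, $D_\alpha\cdot T_\theta^{1/\alpha}$ with independent factors, is exactly the random variable the paper arrives at; but your route to it is genuinely different in its technical execution. The paper first reduces to threshold sequences $u_n$ with $n\,p(u_n)\to\nu$, normalises the waiting-time sum by the \emph{deterministic} quantity $b(n)$, invokes Gnedenko's transfer theorem to pass from $\tau(u_n)/n \cd E$ and $S_n/b(n)\cd D_\alpha$ to $S_{\tau(u_n)}/b(n)\cd Z$ with $\mathcal{L}_Z(s)=\int_0^\infty(\mathcal{L}_{D_\alpha}(s))^y\,\mathrm{d}\PP^E(y)$, computes this Laplace transform directly as a mixture, and only at the end rescales via $b(n)/b(1/p(u_n))\to\nu^{1/\alpha}$ and argues that the subsequence was arbitrary. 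You instead normalise by the \emph{random} quantity $b(\tau(u))$, prove the Anscombe-type convergence and the independence of the limits by hand through conditioning on $\tau(u)$ (legitimate here precisely because $(W_n)$ and $(X_n)$ are independent), and then convert $p(u)\tau(u)\cd T_\theta$ into $b(\tau(u))/b(1/p(u))\cd T_\theta^{1/\alpha}$ via the uniform convergence theorem, with Potter bounds covering the atom at zero. What your approach buys is the avoidance of the subsequence reduction and of the transfer theorem as a citation, at the cost of having to handle regular variation at a random argument — the extended continuous mapping step and the $\{T_\theta=0\}$ case are exactly the places where the details need care (a Skorokhod representation of the joint convergence of $(S_{\tau(u)}/b(\tau(u)),\,p(u)\tau(u),\,1/\tau(u))$ makes them clean); the paper's deterministic normalisation sidesteps Potter bounds entirely. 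Your Laplace-transform identification of $\theta^{-1/\alpha}D_\alpha E^{1/\alpha}$ with $\text{ML}(\alpha,\theta^{-1/\alpha})$ is the same computation as the paper's, merely organised as a conditional expectation rather than as an integral of $(\mathcal{L}_{D_\alpha}(s))^y$ against the law of $E$.
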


\begin{remark}
	In case of an i.i.d. sequence of event magnitudes $(X_n)_{n \in \N}$ the extremal index is $\theta = 1$. Then the return time $T(u)$ is asymptotically Mittag-Leffler distributed according to a fractional Poisson process, see \citet{Hees2021}. 
	
	Waiting times with finite means are covered by the other limiting case $\beta = 1$, with the exponential distribution as a component of the mixture distribution, see Theorem \ref{theorem_FS}. 
\end{remark}

\begin{remark}
    According to Theorem \ref{def:Tu2}, the distribution of the IET $T(u)$ for a given threshold $u$ can be approximated by the asymptotical distribution of $b(1/p(u)) \cdot T_{\beta, \theta}$. For even higher threshold $u + x$, $x > 0$, this implies the approximation $T(u+x) \approx b(1/p(u+x)) \cdot T_{\beta, \theta}$. This is relevant in practice because the threshold $u$ is usually chosen large for the data analysis, but not too large in order to get a sufficient number of observations. With the resulting estimates we can then extrapolate to IETs for even larger, critical thresholds.
\end{remark}

From now on we do not distinguish between the index $\alpha$ and the tail parameter $\beta$. The only exception is the case $\beta = 1$, which corresponds to waiting times with finite means and does not refer necessarily to heavy-tailed waiting times with index $\alpha = 1$.

\begin{remark}
    Since we generally do not know the true distribution of the underlying magnitudes and waiting times, it is difficult to determine the true distribution of $T(u)$ and thus also the convergence rate analytically. Simulations indicate that the convergence rate depends on the underlying distribution and the true parameter values for $\beta$ and $\theta$. Overall, a modified Cramér-von Mises distance (which will be introduced in Section \ref{sec:statinf}) apparently converges with a rate between quadratic and linear. Some illustrations can be found in the supplementary material.
\end{remark}

\section{Model fitting}
\label{sec:statinf}


In this section we treat the estimation of the parameters 
of the mixture distribution derived in Theorem \ref{def:Tu2} using the observed IETs stemming from a sequence of random vectors $(X_i,T_i)_{i=1}^n$ and a threshold $u$.
Restarting the sequence $(X_i,T_i)_{i=1}^n$ at $\tau(u)$, we inductively get the two sequences $(X_j(u))_{j = 0}^{k}$ and $(T_j(u))_{j = 1}^{k}$, where $X_j(u)$ is the $j$-th exceedance of the threshold $u$, and $T_j(u)$ is the IET between $X_{j-1}(u)$ and $X_j(u)$.
Given that we know the previous exceedance, $T_j(u)$ is distributed as $T(u)$ for all $j = 1, \dots, k$.

Theorem \ref{def:Tu2} implies that for a high threshold $u$, we may approximate the distribution of $T(u)$ with the mixture distribution $(1 - \theta) \, \varepsilon_0 + \theta \, \text{ML}(\beta, \theta^{-1/\beta} \sigma_{p(u)})$, where $\sigma_{p(u)} / b(1/p(u))$ is expected to stabilize at a constant as $u$ increases. Thus, in total there are three parameters to be estimated: the tail parameter $\beta$, the extremal index $\theta$ and the scale parameter $\sigma_{p(u)} \approx b(1/p(u)) = p(u)^{-1/\beta} L(1/p(u))$ with a slowly varying function $L$.

The choice of the threshold $u$ means a trade-off between bias and variance: On the one hand, the smaller the threshold, the more exceedances and IETs we have for the estimation (small variance). On the other hand,  the distribution of the IETs may deviate strongly from the mixture distribution (high bias) if the threshold is chosen too low. \citet{Hees2021} explains how to use stability plots for this decision in the situation of fitting a Mittag-Leffler distribution, which corresponds to our special case $\theta=1$. The drawback there is that it is based on a subjective decision and cannot be automated easily. Our focus is not on the choice of the threshold but on the estimation of the parameters from a given sequence of IETs. We do thus not discuss this issue further here.

Searching for a suitable estimation method for estimating $\beta$, $\theta$ and $\sigma_{p(u)}$ simultaneously, we face some difficulties: 
\begin{itemize}
	\item The mixture distribution $\PP_{\beta,\theta, \sigma_{p(u)}}=(1-\theta) \cdot \varepsilon_0+\theta \cdot \text{ML}(\beta, \theta^{-1/\beta} \sigma_{p(u)})$ is neither continuous nor discrete. It is continuous except for a discontinuity point at zero which is the left endpoint of the distribution.
	\item For $\beta < 1$, $\PP_{\beta,\theta,\sigma_{p(u)}}$ is heavy tailed without finite moments.
	\item The observed IETs are all larger than zero, while $\PP_{\beta,\theta,\sigma_{p(u)}}(\{0\}) = 1 - \theta$.
\end{itemize}
These issues make the use of standard estimation methods like maximum likelihood or method of moments difficult or even impossible. 

In this work we propose and investigate \emph{minimum distance estimation} based on  modifications of the \emph{Cramér-von-Mises-distance} for the parameters $\beta \in (0,1]$, $\theta \in (0,1]$ and $\sigma_{p(u)} > 0$ of the mixture distribution.
The minimum distance approach has been introduced by \citet{Wolfowitz1957} and  explored in many further works, see e.g. \citet{Drossos1980} or \citet{Parr1981}.
The main idea is to measure the \enquote{similarity} of the sample data with a parametric model, minimizing a distance measure between the probability density function or the cumulative distribution function of the parametric model and a non-parametric density estimate or the empirical distribution function of the sample data. We use distances based on distribution functions:

\begin{definition}
	Let $Z_1,\dots,Z_n$ be 
    random variables with c.d.f. $F_\vartheta, ~ \vartheta \in \Theta \subset \R^p, ~ p \ge 1$, $F_{n}$ the empirical c.d.f. corresponding to $Z_1,\dots,Z_n$, and $\Delta(\cdot,\cdot)>0$ a function quantifying the distance between two c.d.f.'s. If there is a $\hat \vartheta \in \Theta$ such that
	\begin{align}
		\Delta(F_n,F_{\hat \vartheta}) = \inf\limits_{\vartheta \in \Theta} \Delta(F_n,F_\vartheta),
	\end{align}
	then $\hat \vartheta$ is called a \emph{minimum distance estimate} of $\vartheta$.
\end{definition}
Here, $\Delta(\cdot, \cdot)$ is called \emph{criterion function}. We use a modification of the popular Cramér-von-Mises (CM) distance as criterion function. 

The Cramér-von-Mises distance between two c.d.f.'s $G$ and $H$
is defined as
$\Delta^{[\text{\tiny CM}]}(G,H) = \int_{- \infty}^{\infty} (G(x)-H(x))^2 \, \text{d}H(x)$.
Let $F_{\beta,\theta,\sigma_{p(u)}}$ be the c.d.f. of $\PP_{\beta,\theta, \sigma_{p(u)}}$ and $F^*_{\beta,\theta, \sigma_{p(u)}}$ the c.d.f. of the Mittag-Leffler distribution $\text{ML}(\beta,\theta^{-1/\beta} \sigma_{p(u)})$. 
Because of
$F_{\beta,\theta,\sigma_{p(u)}}(x) = (1-\theta) \cdot \mathbbm{1}_{[0,\infty)}(x)+\theta \cdot F^*_{\beta,\theta, \sigma_{p(u)}}(x)$,
the Cramér-von-Mises distance between $F_{\beta,\theta, \sigma_{p(u)}}$ and the empirical c.d.f. $F_{k}$ of the $k$ observed IETs $t_1, \dots, t_k$ is
\begin{align}
	\Delta^{[\text{\tiny CM}]}(F_{k} , F_{\beta,\theta,\sigma_{p(u)}})
	& = (1-\theta)^3 + \theta \cdot \int_{0}^{\infty} (F_{k}(x) - F_{\beta,\theta,\sigma_{p(u)}}(x))^2 \, \text{d}F^*_{\beta,\theta,\sigma_{p(u)}}(x).
\end{align}
The smaller the value of $\theta$ is, the less influence have the data on the distance $\Delta^{[\text{\tiny CM}]}$. 
Irrespective of  the underlying true parameter values it holds that $\Delta^{[\text{\tiny CM}]}(F_{k} , F_{\beta,\theta,\sigma_{p(u)}}) >(1-\theta)^3$ and $\lim\limits_{\theta \to 0}$ $\Delta^{[\text{\tiny CM}]}(F_{k} , F_{\beta,\theta,\sigma_{p(u)}})  = 1$. 
Since $\Delta^{[\text{\tiny CM}]}(F_{k} , F_{\beta,\theta,\sigma_{p(u)}})  \in [0,1]$, this can lead to a huge bias when we search for the infimum of $\Delta^{[\text{\tiny CM}]}(F_{k} , F_{\beta,\theta,\sigma_{p(u)}})$.

We consider the following modification CMmod of the Cramér-von-Mises distance:
\begin{align}
	\Delta^{[\text{\tiny CMmod}]}(\tilde{F}_{k} , F_{\beta,\theta,\sigma_{p(u)}})
	& = \frac{1}{\theta^2} \int_{0}^{\infty} (\max\{\tilde{F}_{k}(x),1-\theta\} - F_{\beta,\theta,\sigma_{p(u)}}(x))^2 \, \text{d}F^*_{\beta,\theta,\sigma_{p(u)}}(x),
\end{align}
where $\tilde{F_k}$ is the empirical c.d.f. of $t_1 + 1, t_2 + 1, \dots, t_k + 1$, the by one shifted observed IETs. Some explanations are given in Remark \ref{remark:1} below.
\begin{remark}
	\label{remark:1}
	\begin{enumerate}
		\item[]
        \item We get \emph{CMmod} by only considering the continuous part of the integrator of \emph{CM}.
        
		\item We truncate the empirical c.d.f., because $F_{\beta,\theta,\sigma_{p(u)}}(x) > 1-\theta$ for all $x>0$. 
        \item Since $(\max\{\tilde{F}_{k}(x),1-\theta\} - F_{\beta,\theta,\sigma_{p(u)}}(x))^2 \in [0,\theta^2]$, we additionally standardise it with $\theta^2$.
        \item We use $t_i + 1$ instead of $t_i$ for all $i \in \{1, \ldots, k\}$, because prior simulations have shown that this improves parameter estimation and the asymptotics from Theorem \ref{def:Tu2} still hold since $b(1/p(u)) \to \infty$:
        \begin{align}\frac{T(u) + 1}{b(1/p(u))} \cd T_{\beta, \theta}.\end{align}
	\end{enumerate}
\end{remark}
For computations we prefer rewriting the distances in terms of sums. 
After some cumbersome but straightforward calculations we get
\begin{align}
\begin{array}{ll}
	& \Delta^{[\text{\tiny CMmod}]}(\tilde{F}_{k} , F_{\beta,\theta,\sigma_{p(u)}}) \\
	& = \frac{1}{\theta^3} \frac{1}{k} \sum_{i = l + 1}^{k} \left(\frac{i-\frac{1}{2}}{k}-F_{\beta,\theta,\sigma_{p(u)}}(t_{(i)} + 1)\right)^2
	+ \frac{k - l}{12 k^3 \theta^3}
	- \frac{\left(k(1-\theta)\right)^3 - l^3}{3 k^3 \theta^3} \nonumber \\
	& ~~~~
	+ \frac{\left(k(1-\theta)\right)^2 - l^2}{k^2 \theta^3} F_{\beta,\theta,\sigma_{p(u)}}\left(t_{(l)} + 1\right) \nonumber
	- \frac{k(1-\theta) - l}{k \theta^3} F_{\beta,\theta,\sigma_{p(u)}}\left(t_{(l)} + 1\right)^2,
    \end{array}
\end{align}
where $t_{(1)} < \dots < t_{(k)}$ are the ordered IETs and $l := \lceil k(1-\theta) \rceil$, with $\lceil \cdot \rceil$ being the ceiling function and $k$ the number of IETs.

The CMmod distance converges to $1 / 3$ as $\theta \to 0$, since for $\theta < 1 / k$, $l = \lceil k (1 - \theta) \rceil = k$ and therefore for $1/k >  \theta$, it holds
$\Delta^{[\text{\tiny CMmod}]}(\tilde{F}_{k} , F_{\beta,\theta,\sigma_{p(u)}}) 
    = \frac{1}{3}$.
We thus suggest restricting the parameter spaces of both, $\beta$ and $\theta$, to a compact interval $[a,1]$ for some lower bound $a > 1 / k$, so that the minimum distance estimate $(\hat\beta,\hat\theta, \hat\sigma_{p(u)})$ of $(\beta,\theta,\sigma_{p(u)})$ shall fulfill
\begin{align}
	& \Delta^{[\text{\tiny CMmod}]}(\tilde{F}_{k},F_{\hat\beta,\hat\theta,\hat\sigma_{p(u)}}) = \inf\limits_{\substack{\beta,\theta \in [a,1] \\ \sigma_{p(u)} \in (0, \infty)}} \Delta^{[\text{\tiny CMmod}]}(\tilde{F}_{k},F_{\beta,\theta,\sigma_{p(u)}}).
\end{align}
The lower boundary $a$ can be chosen depending on the situation and prior knowledge. We believe that $a = 0.1$ might usually be an appropriate choice, since we expect that the true parameter value is usually larger than this. Otherwise about $90\%$ of the inter-exceedance times would be close to zero.

We also explored further modifications of the CM distance. 
However, they turned out to be less suitable and are thus not considered here.

\section{Simulation Study}
\label{sec:simStudy}


In this section we analyse the performance of the minimum distance method proposed above.
All statistical computations are done with R \citep{RCore}. 

\subsection*{Scenarios}
We generate 1000 event sequences for each of several scenarios. We consider event sequences from  max-autoregressive processes defined as
\begin{align}
	X_1 & := Y_1 \\
	X_{i+1} & := \max\{(1-\theta) \cdot X_{i} , \theta \cdot Y_{i+1}\},
\end{align}
where $Y_i$, $i=1,\dots,n$, are independent unit Fréchet random variables and $\theta \in \{0.5,0.6,\dots,1\}$ is the extremal index. 
In case of $\beta=1$, we consider the following distributions for the stochastically independent waiting times $W_i$, $i=1,\dots,n$:
\begin{enumerate}[label = (\alph*)] 
	\item Exponential distribution with mean equal to one.
	\item Dirac measure at point one (i.e., deterministic waiting times equal to one).
	\item Pareto distribution with stability parameter $\alpha = 1.5$ and mean equal to one but infinite variance.
    \item Pareto distribution with stability parameter $\alpha = 2.5$, mean equal to one and finite variance. 
\end{enumerate}
For $\beta < 1$ the waiting times are in the domain of a positively skewed sum-stable distribution with stability parameter $\beta \in \{0.5, \dots,0.9\}$. We consider these three distributions: 
\begin{enumerate}[label = (\roman*)] 
	\item stable distribution,
	\item Mittag-Leffler distribution and
	\item Pareto distribution with shift one,
\end{enumerate}
such that the slowly varying component $L(n)$ of $b(n) = n^{1/\beta} L(n)$ is constant equal to one. Therefore, we consider $\rho = \sigma_u \cdot p(u)^{1/\beta} \approx 1$ instead of $\sigma_u \approx b(1/p(u)) = p(u)^{-1/\beta}$ as scaling parameter. More details regarding the waiting time distributions can be found in the supplementary material. For illustration, Figure~\ref{fig:sim2} shows the densities of the waiting time distributions presented above.

\begin{figure}[htb] 
	\centering
	\includegraphics[width = 1\textwidth]{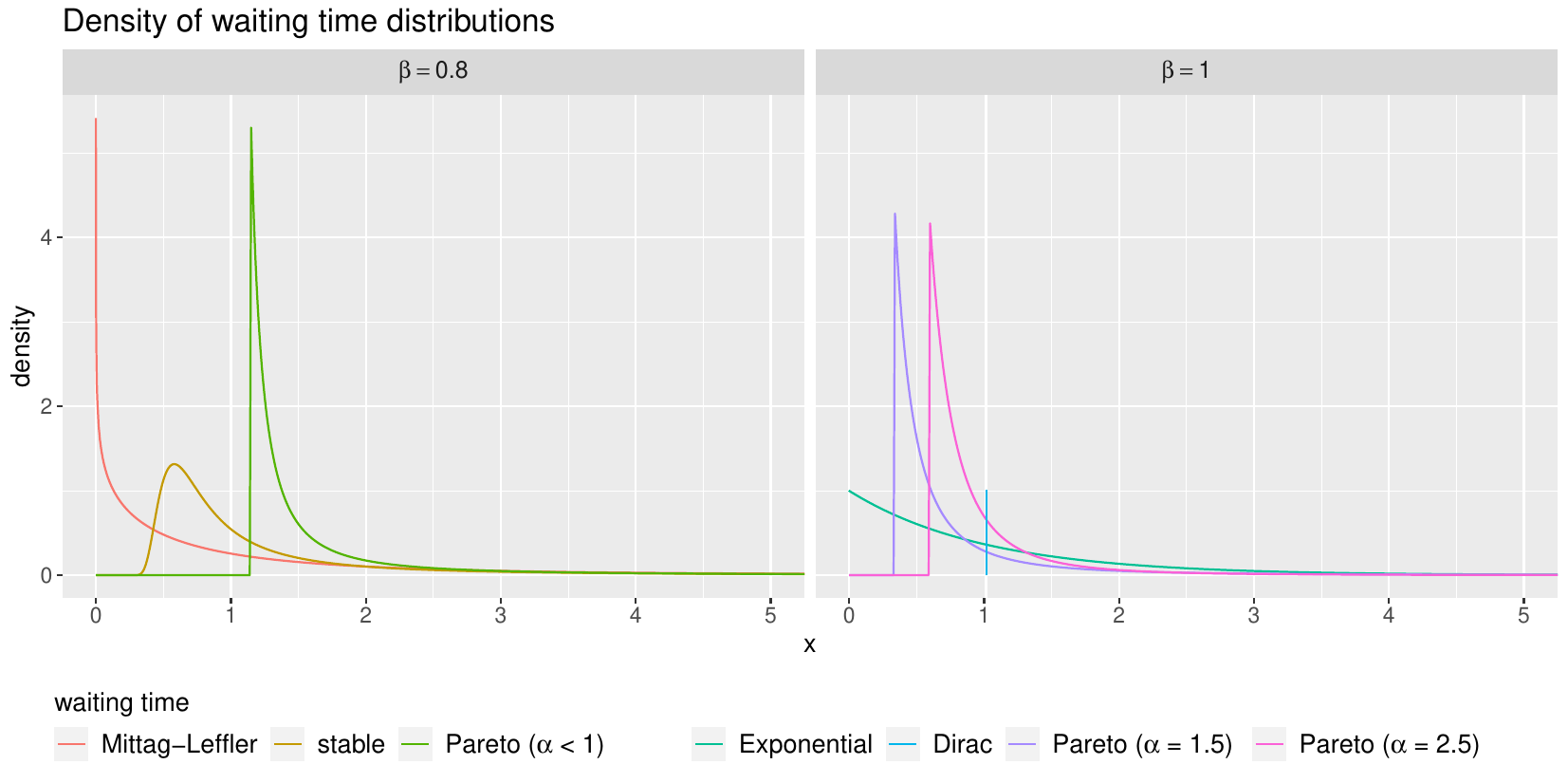}
	\caption{Density of the continuous waiting time distributions with tail parameter $\beta = 0.8$ (left) or tail parameter $\beta = 1$ (right). In case of the Dirac measure it is the probability mass function.}
	\label{fig:sim2}
\end{figure}

We focus on sequences $(X_i, W_i)_{i = 1, \dots, n}$ of length $n = 10000$. In our context, this means that if we had on average hourly observations, we would need data from about 14 months to reach $n = 10000$ observations; if we had daily observations, we would need data from about 27.4 years, and if we had on average weekly observations, we would need data from about 185 years. Results for other sample sizes are given in the supplementary material.

We set the threshold to the 98\% sample quantile, such that the 2\% largest magnitudes are considered as exceedances. For $n = 10000$ observations this leads to $k = 200$ exceedances. In general, selecting an appropriate threshold is a difficult task. It cannot be set too high because we require a sufficient number of inter-exceedance times to compute the empirical distribution function. Conversely, the approximation may not be accurate if the threshold is set too low. Previous studies not reported here suggest that 2\% is a reasonable choice in our scenarios. 

For minimisation we use the standard optimisation algorithm \emph{L-BFGS-B} based on quasi-Newton with several starting points \citep{Byrd1995}. We restrict the search space to $[a,1]\times[a,1]\times(0,\infty)$ with $a = 0.1$ as discussed before. We report the root of the mean-square error (RMSE) and the bias of the point estimators.

\subsection*{Results}
When reporting the simulation results, we pay attention to the differences between the waiting time distributions. In the special cases $\beta = 1$ and $\theta = 1$ we compare our estimators with established estimators for these scenarios.

Overall, the results of the simulation study in Figure~\ref{fig:sim1} are rather satisfactory and differ only slightly with respect to the different waiting time distributions in general. In some cases, the Pareto distribution leads to a slightly larger bias. In almost all cases, the bias and RMSE decrease as the parameter values for $\beta$ and $\theta$ approach their upper limit 1. 

\begin{figure}[p] 
	\centering
	\includegraphics[width = 1\textwidth]{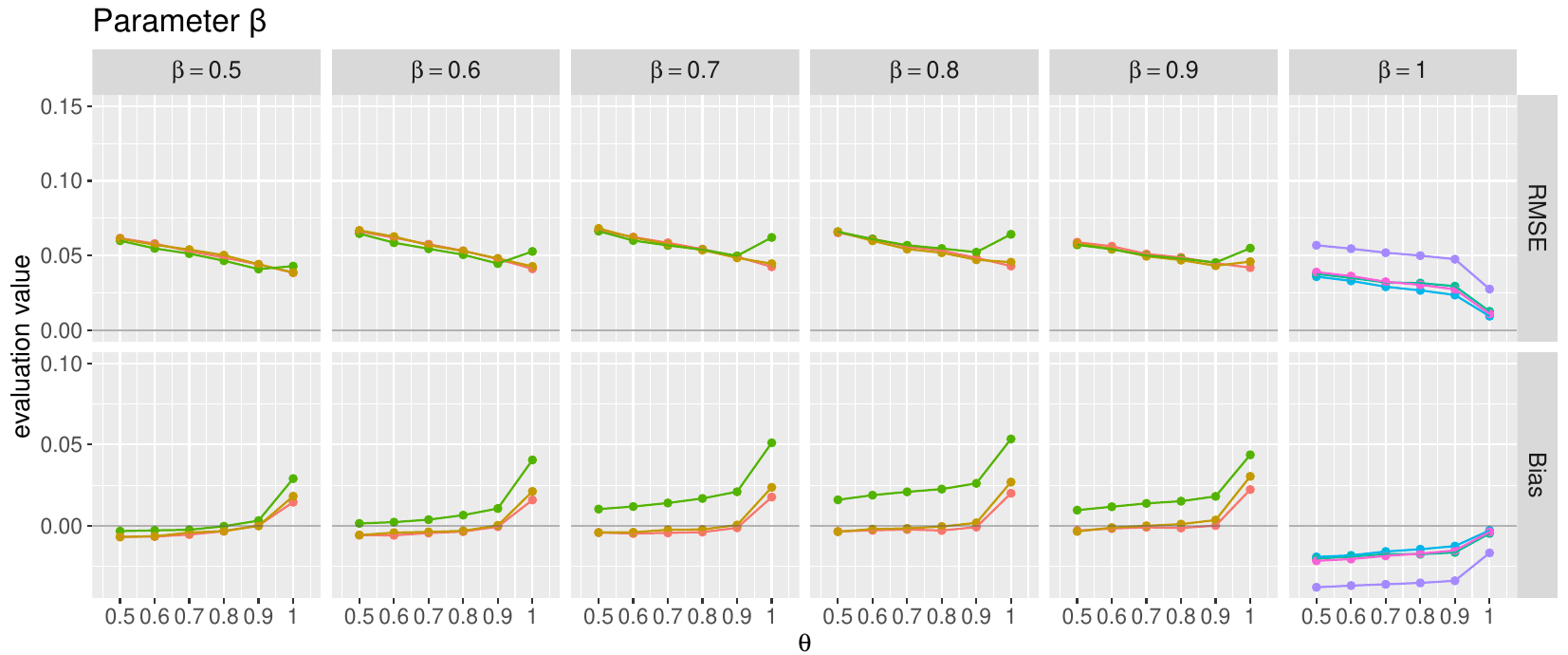}
 	\includegraphics[width = 1\textwidth]{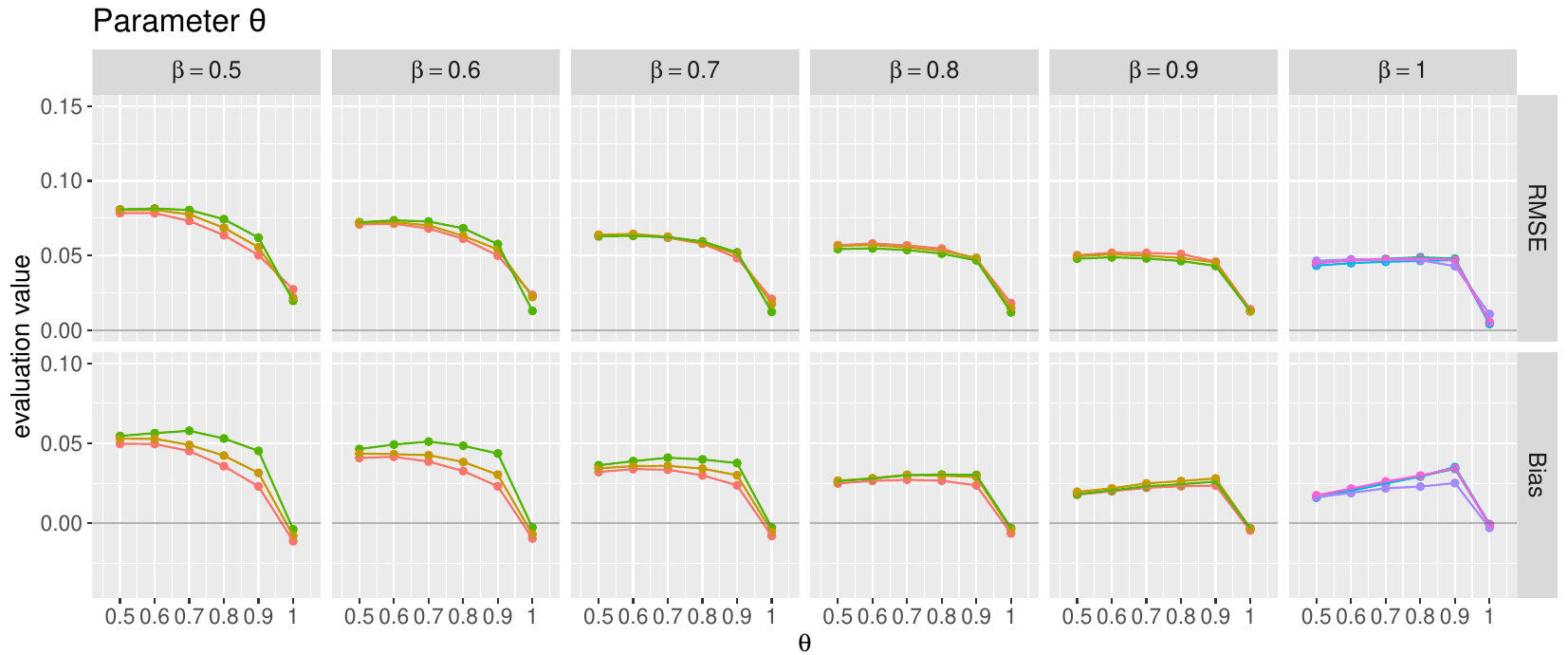}
  	\includegraphics[width = 1\textwidth]{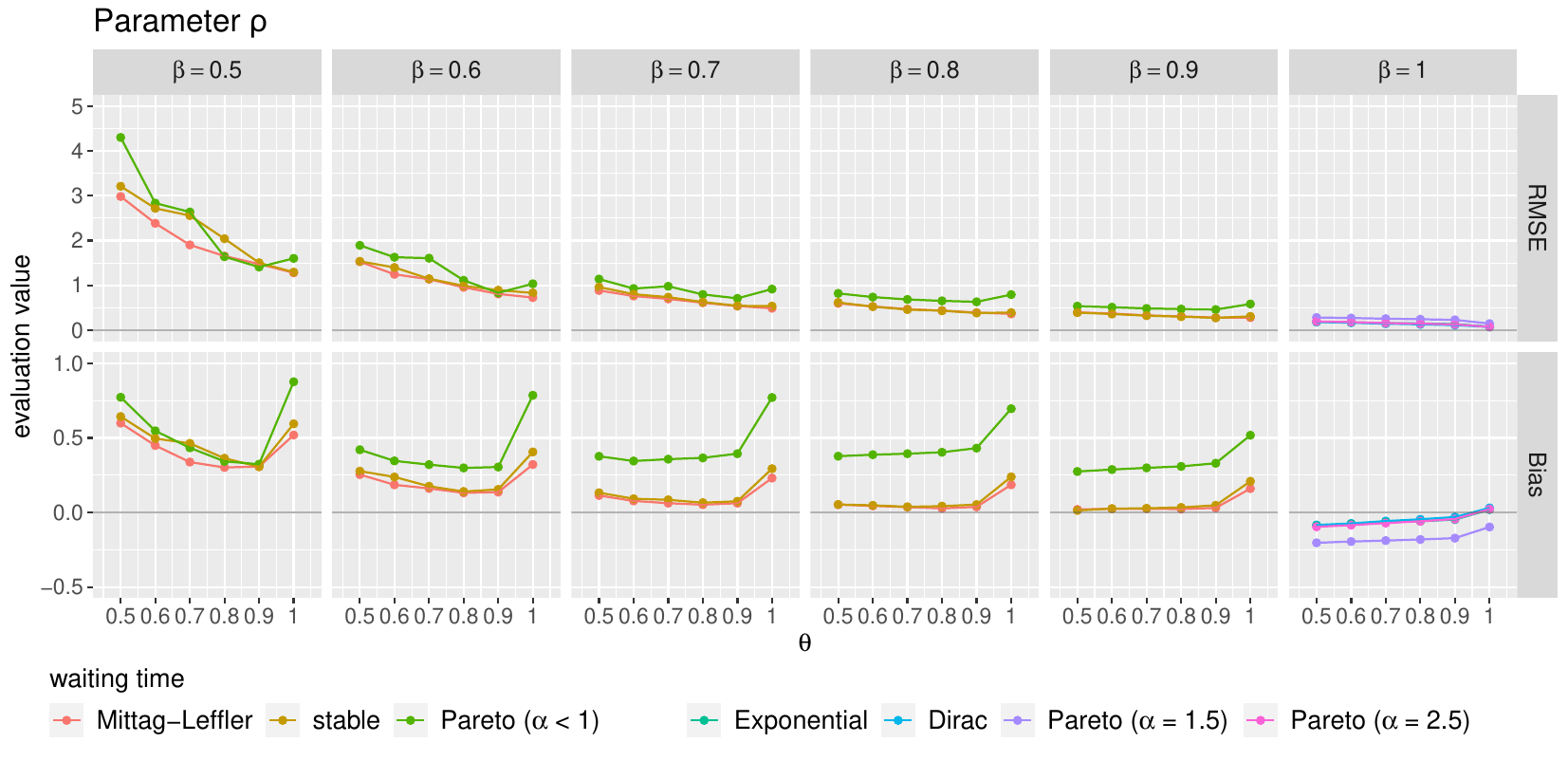}
	\caption{RMSE and Bias of the CMmod estimator for the tail parameter $\beta$ (top), for the extremal index $\theta$ (middle) and the scale parameter $\rho$ (bottom).}
	\label{fig:sim1}
\end{figure}

\begin{figure}[p] 
	\centering
	\includegraphics[width = 1\textwidth]{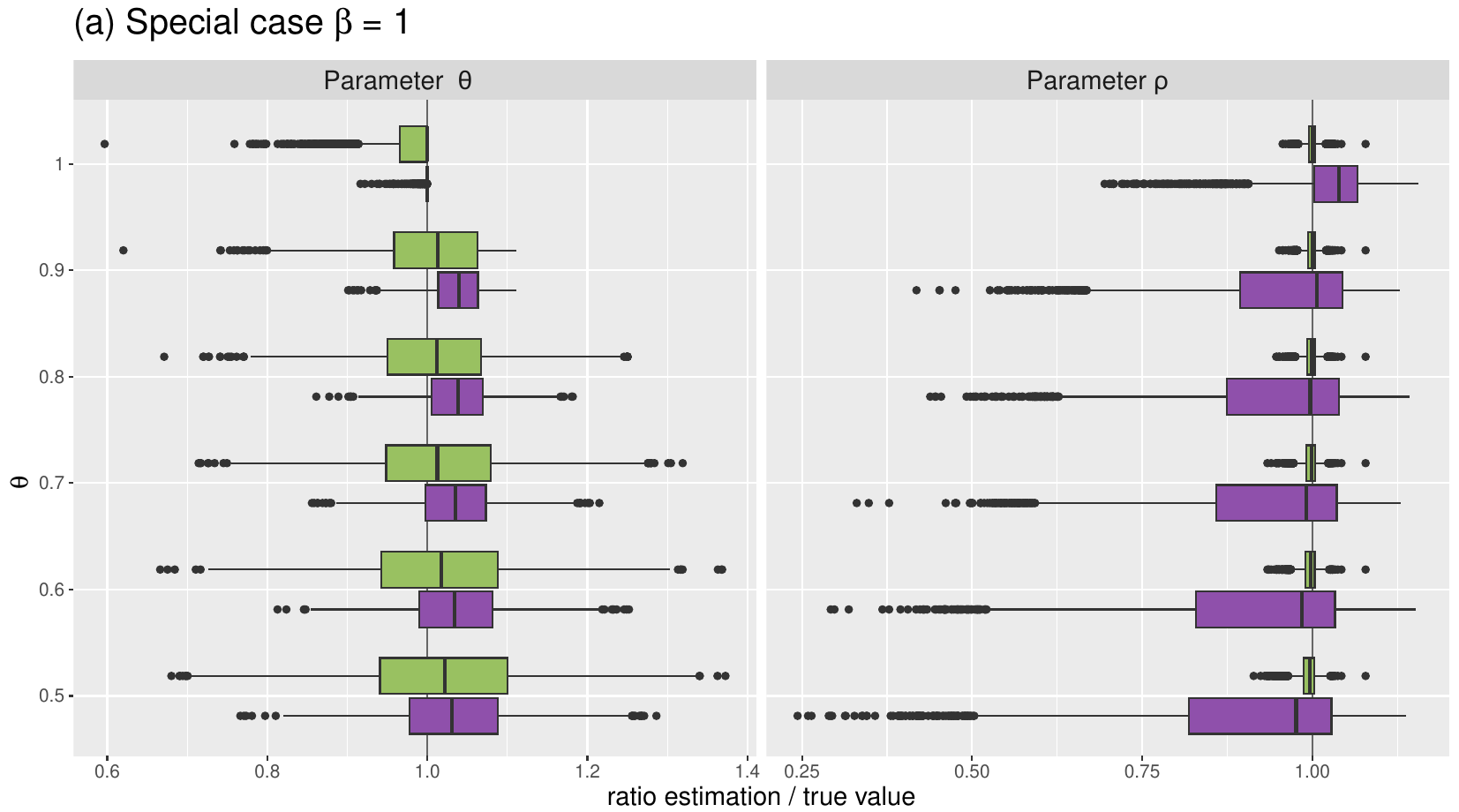}
 	\includegraphics[width = 1\textwidth]{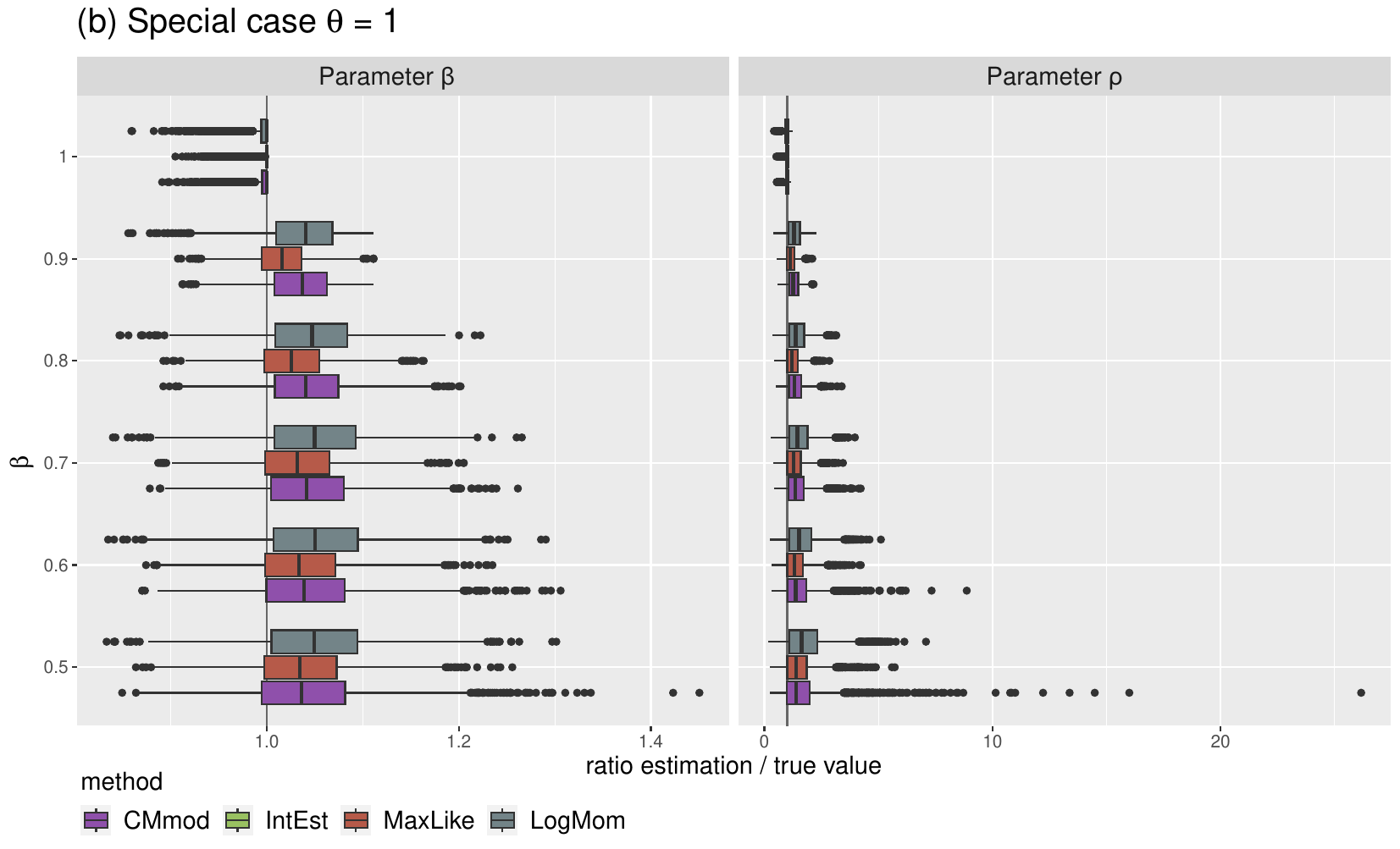}
	\caption{Comparison of CMmod with (a) the interval estimator in the special case $\beta = 1$ across all waiting time distributions except the Pareto distribution with $\alpha = 1.5$ (top), and with (b) the maximum likelihood and log-moment estimator in the special case $\theta = 1$ across all waiting time distributions.}
	\label{fig:sim3}
\end{figure}

As discussed in Sections \ref{sec:probabilisticmodel} and \ref{sec:statinf}, for $\beta = 1$ we are in the special case of a mixture distribution with the exponential distribution as continuous part, $T(u) \approx (1-\theta) \varepsilon_0 + \text{Exp}(p(u) \rho^{-1})$ with $\varepsilon_0$ being the Dirac measure. The IETs are not heavy-tailed then and $\theta$ is called \emph{extremal index}, see e.g. \citet{Beirlant2004}. Among the many estimators of the extremal index we choose the popular interval estimator $\hat{\theta}_I$ of \citet{Ferro2003} for comparison, since it uses the IETs and does not need any hyperparameters for calculation. We need to adapt it slightly since we may have IETs $T_1(u), \ldots, T_k(u)$ smaller than one. We therefore replace $T_i(u)-1$ with $\max\{T_i(u) - 1, 0\}$ and $T_i(u)-2$ with $\max\{T_i(u) - 2, 0\}$, respectively.

The mean of the exponential distribution $p(u)^{-1} \rho$ can be estimated separately by the mean of the waiting times $W_1, \dots, W_n$ multiplied with $n/k$, or by the mean of the IETs $T_1(u), \dots, T_k(u)$. 

Figure~\ref{fig:sim3} (a) shows the results of the interval estimator and the CMmod estimator where the boxplots are calculated across all waiting times, excluding the Pareto distribution with $\alpha = 1.5$ (details below). The minimum distance method shows a slightly larger bias for the extremal index $\theta$, but its variability is typically smaller resulting in a smaller RMSE. The scale parameter $\rho$ is estimated more accurately by the interval estimator. However, the interval estimator struggles when the waiting times are Pareto distributed with stability parameter $\alpha = 1.5$, and its estimation accuracy does not improve for larger sample sizes. This is plausible, since the variance does not exist and the interval estimator uses the ratio of the first two distribution moments. Therefore, this distribution is not included in the results of Figure~\ref{fig:sim3} (a). See the supplementary material for the comparison in case of the Pareto distribution with stability parameter $\alpha = 1.5$.

 For $\theta = 1$ we are in the special case of asymptotically Mittag-Leffler distributed IETs, i.e., $T(u) \approx ML(\beta, p(u)^{-1/\beta} \rho)$. Thus we can compare our estimation method for $\beta$ and $\rho$ with the established maximum likelihood estimator and the log-moment estimator \citep{Cahoy2010} for the tail and scale parameter of the Mittag-Leffler distribution, which are based on the log-transformed data. Both are implemented in the R package \texttt{MittagLeffleR}. 
The comparison for the tail and the scale parameter is shown in Figure~\ref{fig:sim3} (b). Maximum likelihood usually shows the best performance. This confirms findings by \citet{Hees2021} that the maximum likelihood estimator often outperforms the log-moment estimator. The CMmod estimator performs similarly to the log-moment estimator, although it needs to estimate the parameter $\theta$ additionally. For larger sample sizes, the results of the CMmod estimator are even better than those of the log-moment estimator and similar to the results of the maximum likelihood estimator (see illustrations in the supplementary material).

Overall, CMmod shows quite satisfactory performance even in both special cases, although it needs to estimate one parameter more than the competitors which are designed for these scenarios.
A drawback is the high computing time of the minimum distance method. Numerical optimisation is needed to find the triplet $(\hat{\beta},\hat{\theta},\hat{\sigma}_u)$ for which the distance is minimal. Because of possible multiple local minima we use several initialisations ($\{0.25,0.55,0.85\}^2 \times \{\hat{\sigma}_{\text{LogMom}}\}$), where $\hat{\sigma}_{\text{LogMom}}$ is the log-moment estimator in case of the Mittag-Leffler distribution. The computing time seems to be linear in the number of exceedances $k$ (see Figure~\ref{fig:sim4}). 
The computing times needed for the maximum likelihood method in the special case $\theta = 1$ are shown for comparison. They are also much higher than those of the log-moment and the interval estimator.

\begin{figure}[htb]
	\centering
	\includegraphics[width = 0.66\textwidth]{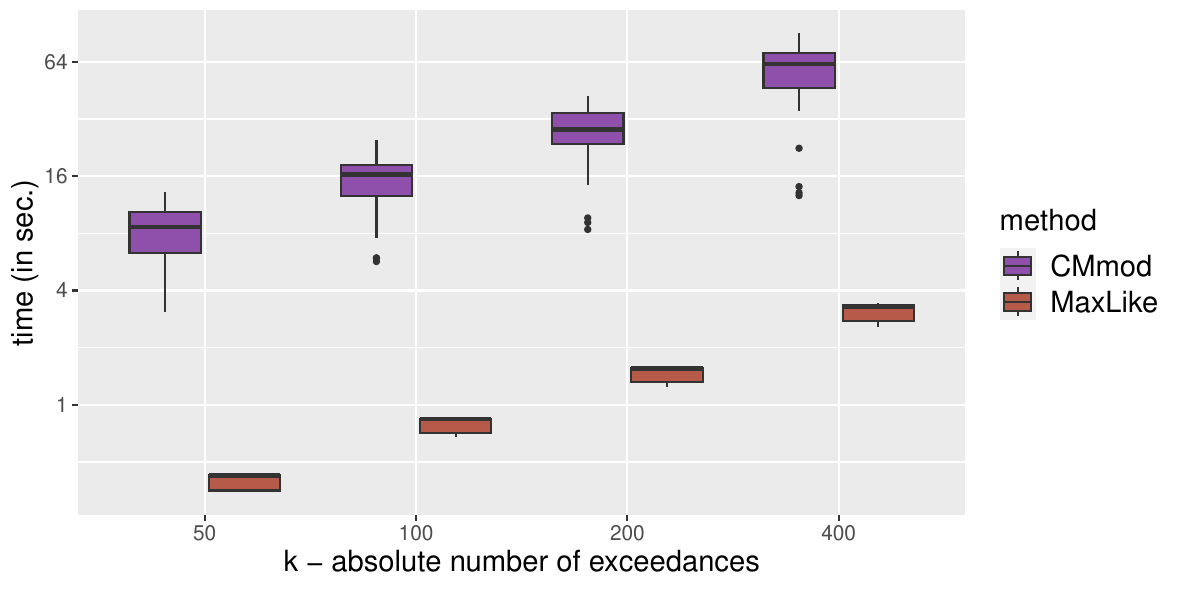}
	\caption{Computing time of CMmod and the Mittag-Leffler maximum likelihood estimator for an increasing number of exceedances $k = 50, 100, 200, 400$. Note that both the x-axis and y-axis are log-transformed.}
	\label{fig:sim4}
\end{figure}

\subsection*{Parametric bootstrap procedures}

In the upcoming data analysis in Section \ref{sec:example}, we use parametric bootstrap methods for statistical inference. We now examine these in a small simulation study to demonstrate their reliability.
Due to the high computation time required, we only consider waiting times with stable distributions. For each parameter selection, we choose $N=100$ simulation runs of size $n=10000$ with $k=200$ exceedances. For each simulation run with parameter estimates $(\hat{\theta},\hat{\beta},\hat{\rho})$, we then generate $B=100$ bootstrap samples of the same size from a FCPP with these parameter values and compare the $N$ estimates obtained from the bootstrap datasets  to the estimate obtained jointly from the $N$ original datasets. 
For a significance test of the null hypothesis $\beta=1$ ($\theta=1$) at a given significance level such as $\alpha=0.05$ we reject this hypothesis if less than $100 \cdot \alpha\%$ of the $B$ bootstrap estimates for this parameter equal one.

Figure~\ref{fig:sim5} displays the estimated rejection rates  of the bootstrap tests for the two null hypotheses $H_0:\beta=1$ and $H_0:\theta = 1$. 
In case of the test for the hypothesis $\beta=1$, the empirical rejection rates are below 9\% under the null hypothesis in all scenarios considered here. The rejection rate increases quickly as $\beta$ decreases. 
In case of the test for the hypothesis $\theta=1$, a more stringent significance level of 3\% is maintained in the scenarios considered here. The rejection rate increases moderately at first, but for $\theta \le 0.8$, the test rejects the null hypothesis reliably.

These results highlight the trustworthiness of the classification obtained in Section \ref{sec:example} for the real data (see Figure~\ref{fig:exm-3}).
In a similar manner, approximate two-sided confidence intervals could be calculated by using the $\alpha/2$ and the $1-\alpha/2$ quantile of the estimates obtained for the bootstrap samples as boundaries for each parameter.

\begin{figure}[htb]
	\centering
	\includegraphics[width = 0.8\textwidth]{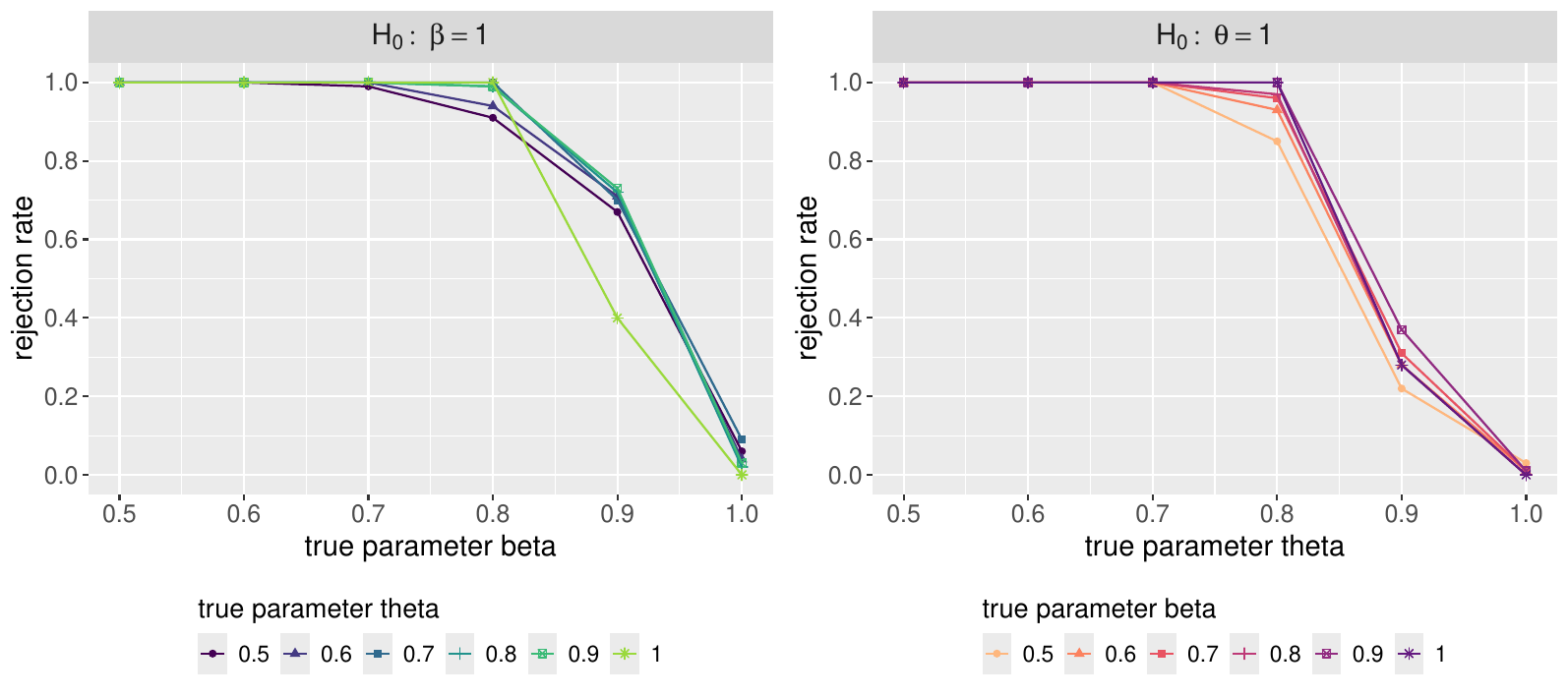}
	\caption{Rejection rates of the bootstrap-tests for the hypothesis $\beta=1$ (left) and $\theta=1$ (right) with $n=10000$ and $k=200$.}
	\label{fig:sim5}
\end{figure}

Since we do not know the true standard errors of our estimators, we estimated them in the data analysis in Section \ref{sec:example} from  bootstrap samples. Figure~\ref{fig:sim6} shows, for three selected parameter combinations similar to those for the three locations in the data analysis, that the bootstrap procedure yields reasonable estimates, with the boxplots representing the distribution of the bootstrap standard errors and the blue dot representing the standard error estimated from $N=100$ simulation runs performed with the true parameter values.

\begin{figure}[htb]
	\centering
	\includegraphics[width = 1\textwidth]{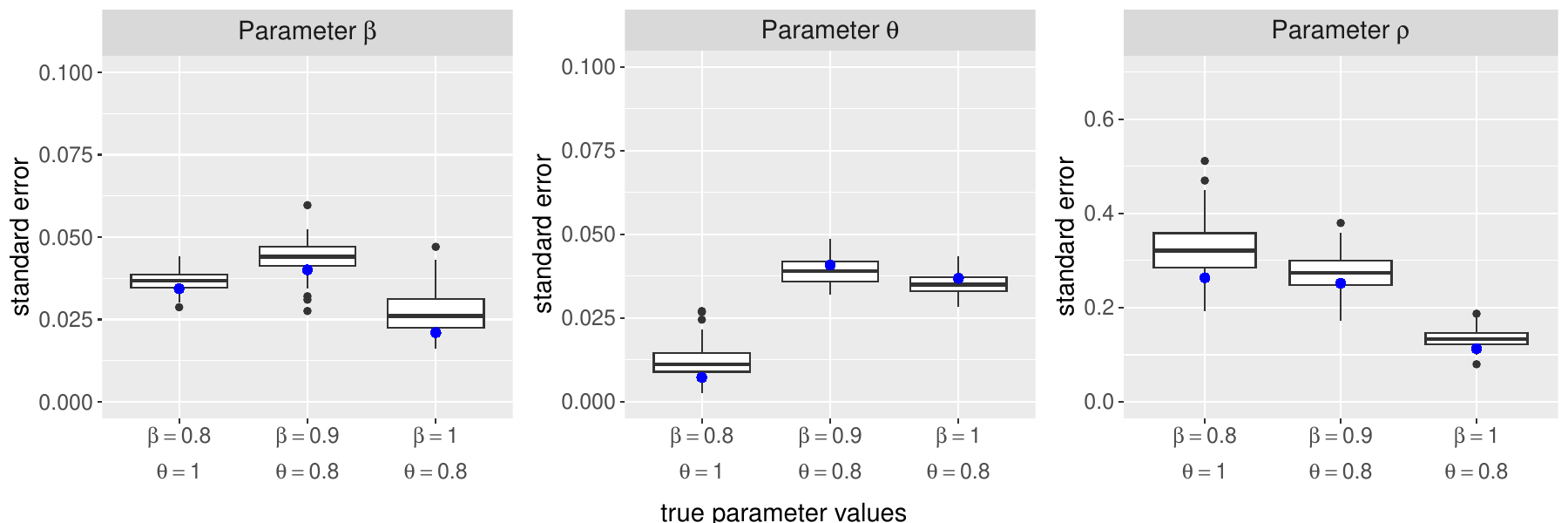}
	\caption{Comparison of the bootstrap standard errors (boxplot) with standard errors estimated by simulations (blue points)   for three scenarios.}
	\label{fig:sim6}
\end{figure}

\section{Performance on alternative clustering processes} \label{sec:simStudy2}

Now we consider two other point processes that also describe temporal clustering as alternatives to the FCPP.
We inspect the respective advantages and limitations and investigate which aspects and behaviours can be captured by the different approaches.

The first alternative considered is the Hawkes process, which is widely used to model temporal clustering in point processes. In a Hawkes process, the occurrence of an event increases the probability of events in the near future, thereby creating temporal clustering. It has been applied across various fields, most prominently in finance (see, e.g., \cite{Hawkes_finance}), but also in areas such as seismology \citep{hawkes_earthquakes} or ecology \citep{hawkes_ecology}. 
Furthermore, the log-Gaussian Cox process (LGCP) is often used to model clustering in point processes, especially in spatial and spatio-temporal settings. It is a Poisson process with its logarithmic intensity being a Gaussian process. Recent application examples include forest fires in Nepal \citep{lgcp_fire} or seismic activity in Greece and Italy \citep{lgcp_earthquakes}.

In the following simulations, we investigate how the FCPP performs when applied to data generated from a Hawkes process or an LGCP. This allows us to examine how other clustering mechanisms are captured by the FCPP.  
To this end, we simulate data from both alternative processes and analyse the parameters and distributions obtained by fitting an FCPP to the simulated data. 
In order to consider scenarios which are of interest in our context, we 
first fit the Hawkes process and the LGCP to the mid-latitude cyclone data (see Section~\ref{sec:example}) observed at Locations~A, B and C, which were already used in Figure~\ref{fig:exm-4} of Section~\ref{sec:cyclones}. Then, we generate 300 datasets with 320 observations from each of the fitted models, similar to our sample sizes.  

For fitting and simulating the Hawkes process we use the R~package \texttt{stelfi} \citep{stelfi}. The Hawkes process considers a baseline rate $\mu_\text{H}$ for the intensity, which is increased by a value $\alpha_\text{H}$ immediately after an event, followed by an exponential decay of this increase controlled by the parameter $\beta_\text{H}$.

Estimation and simulation of the temporal LGCP are implemented using the R~package \texttt{INLA} \citep{INLA}. We denote the mean of the underlying Gaussian process by $\mu_\text{lGC}$, while $\sigma^2_\text{lGC}$ and $V_\text{lGC}$ are the variance and the range parameter of the exponential covariance function.

The parameters obtained from fitting the Hawkes, the LGCP or the FCPP process to the cyclone data at Locations~A, B and C are reported in Table~\ref{tab_competitors}. At Location~A, an FCPP with $\beta=1$ is fitted, corresponding to the special case of a CPP, i.e., clusters of events separated by exponential return times. The Hawkes process adapts to such clustering by a rather large increase of the intensity governed by $\alpha_\text{H}$. The LGCP fits these data with a large variance and a small range parameter. At Location~C, the situation is reversed, as the FCPP fit yields $\theta=1$, i.e., an FPP with heavy-tailed inter-exceedance times. The Hawkes process is not designed for such scenarios and fits a very small intensity increase $\alpha_\text{H}$ after an event with a rapid decay regulated by $\beta_\text{H}$, while the LGCP uses a long range to describe such data. At Location~B the FCPP combines the features of local clustering and heavy-tailed return times. The fitted Hawkes process shows a large increase of the intensity followed by a slower decay compared to the other locations. For the LGCP, the estimated variance and range parameters lie between those for the other locations.

\begin{table}[htb]
	\centering
	\caption{Parameter estimates for the Hawkes process, the LGCP and the FCPP at Locations~A, B and C.}
	\label{tab_competitors}
	\begin{tabular}[h]{r | ccc | ccc | cc}
		\toprule
		 & \multicolumn{3}{c}{Hawkes} &  \multicolumn{3}{c}{LGCP} & \multicolumn{2}{c}{FCPP} \\
		Location & $\mu_\text{H}$ & $\alpha_\text{H}$ & $\beta_\text{H}$ & $\mu_\text{lGC}$ & $\sigma^2_\text{lGC}$ &  $V_\text{lGC}$ & $\theta$ & $\beta$  \\
		\midrule
		A & 0.035 & 0.4118 & \phantom{111}2.21 & -3.93 & 2.01 & \phantom{1}1.15 & 0.83 & 1.00 \\
        B & 0.026 & 0.2791 &  \phantom{111}1.17 & -4.34 & 1.71 & \phantom{1}5.10 & 0.84 & 0.88\\
        C & 0.027 & 0.0002 & 7591.14 & -4.30 & 1.31  & 13.89 & 1.00 & 0.77\\
		\bottomrule
	\end{tabular}
\end{table}

Figure~\ref{fig:sim7} depicts boxplots of the FCPP parameter estimates obtained for the data simulated from the LGCP and the Hawkes processes fitted to Locations~A, B and C. The FCPP parameter estimates obtained for the underlying real mid-latitude cyclone data are represented by points of the same colour.
The results confirm that the Hawkes process has little ability to generate return times with a heavy upper tail. 
This is particularly evident at Location~C, where the estimated tail parameter $\beta$ is 0.77 for the original data, while all estimates obtained from the simulated Hawkes data exceed 0.94.
At Location B, the estimates of $\theta$ are smaller on average than for the real data, whereas the estimates of $\beta$ are generally larger, giving more weight to the lower and less to the upper tail. This again suggests that the Hawkes process may not adequately reproduce the relatively heavy upper tail found for the original data.
For Location~A, the estimates obtained from the simulated Hawkes process are similar to those for the real data, since there is no indication of a heavy upper tail in the return times of the original data.

\begin{figure}[htb] 
	\centering
	\includegraphics[width = 0.8\textwidth]{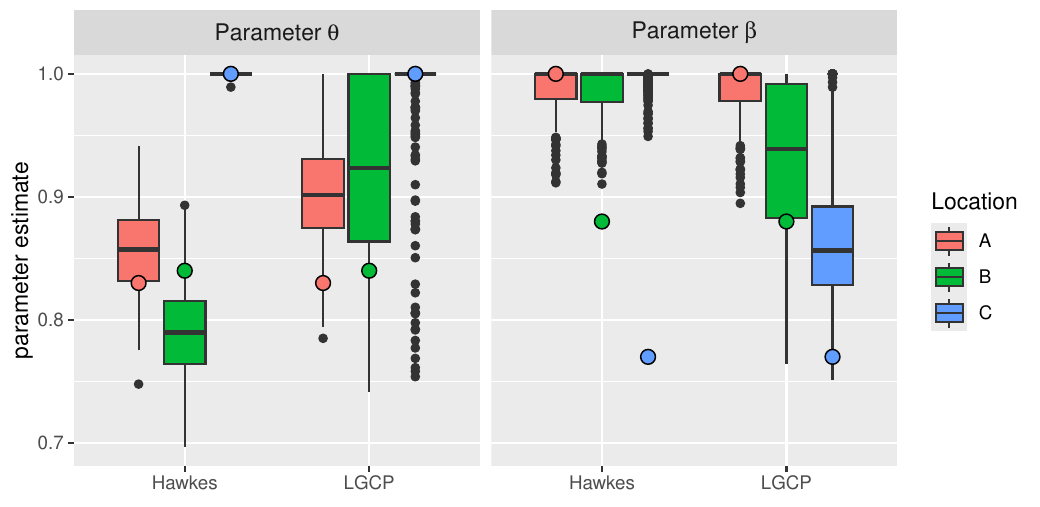}
	\caption{Boxplots of the FCPP parameter estimates $\hat{\theta}$ and $\hat{\beta}$ when fit to data simulated from Hawkes and log-Gaussian Cox processes. The parameter estimates obtained from the original data are shown by points of the same colour.}
	\label{fig:sim7}
\end{figure}

The results obtained for the LGCP data differ notably from those for the Hawkes process. For $\theta$, the estimates based on the data simulated from a LGCP at Locations~A and B tend to be slightly higher than the corresponding estimates for the original data. Similarly, most estimates of $\beta$ from the simulated LGCP data at Locations~B and C are higher than the original estimates. 
In the special cases of $\hat{\theta} = 1 $ at Location~C and $\hat{\beta} = 1 $ at Location~A, the median estimate from the simulated LGCP data coincides with the original estimate.
Overall, these findings indicate that the LGCP captures and generates both heavy lower and upper tail behaviour to some extent, but less so than the FCPP. 

We also evaluate how closely the return time distributions of the fitted FCPPs in Figure~\ref{fig:sim7} resemble those of the underlying LGCP or Hawkes process. Since we do not have an analytic formula available for the latter, we compute the two-sample Cramér–von Mises distances \citep{twosample_cvm} between datasets simulated from the competitor processes and datasets generated from the FCPP fitted to them. Specifically, we generate 150 datasets from every fitted FCPP and compute the two-sample Cramér-von Mises distance between the datasets simulated from the FCPP and those simulated from the corresponding LGCP or Hawkes process. These distances are then compared with the two-sample Cramér-von Mises distances between datasets simulated from the underlying LGCP or the Hawkes process. 
This comparison allows to assess how strongly the FCPP deviates from the Hawkes process or LGCP relative to the variability among datasets drawn from this Hawkes process or LGCP. The results are presented in Figure~\ref{fig:sim8}. 

\begin{figure}[htb] 
	\centering
	\includegraphics[width = 0.8\textwidth]{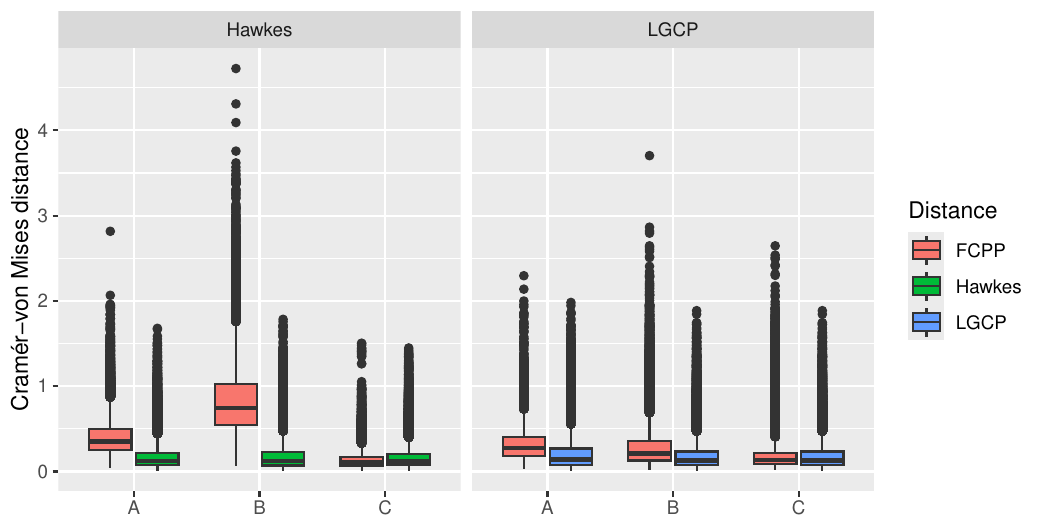}
	\caption{Boxplots of the two-sample Cramér-von Mises distances between different datasets simulated from the same Hawkes process or the same LGCP (with parameters corresponding to Location~A, B or C) and between such datasets and datasets generated from the FCPP fitted to them.}
	\label{fig:sim8}
\end{figure}

The two-sample Cramér-von Mises distances between the FCPP and LGCP datasets are only slightly larger than the distances among the LGCP datasets themselves, indicating a good fit of the FCPP model to the LGCP. At Location~C, the FCPP seems to capture the LGCP particularly well, but the results look adequate at all three locations.

The distances between the FCPP and the Hawkes process datasets vary more strongly across the three locations. 
At Location~C, the FCPP seems to capture and reproduce the return times of the underlying Hawkes process well. 
At Location~A, the Cramér-von Mises distances between the FCPP and the Hawkes process datasets are somewhat larger than the distances between the Hawkes process datasets themselves. 
At Location~B, the FCPP datasets differ more strongly from the Hawkes process datasets. At this location, the Hawkes process fits a relatively small value of $\beta_\text{H}$ corresponding to a long-lasting increase of the intensity caused by past events, which is not captured very well by the asymptotic point mass in 0 used by the FCPP. Moreover, the FCPP sometimes fits return times with a heavy upper tail, albeit the underlying Hawkes process does not seem to generate return times with this feature. 

In conclusion, this study indicates some problems of the FCPP fitting return times with a heavy lower tail which is spread out over a wide range, as in case of a Hawkes process with a small value of $\beta_\text{H}$. In contrast, Hawkes processes show difficulties in fitting return times with heavy upper tails. 
As opposed to this, the FCPP fitted data generated from a LGCP well in our study, and in turn the LGCP is able to cover heavy upper and lower tail behaviour to some extent. An advantage of the FCPP in the context of extreme value statistics is that it arises from sound asymptotical reasoning and allows extrapolation to IETs between even more extreme event magnitudes.

\section{Data analysis on mid-latitude cyclones}
\label{sec:example}


Now we continue our analysis of the occurrences of extreme mid-latitude cyclones on the northern hemisphere. 
We apply the fractional compound  Poisson process (FCPP) introduced in Section \ref{sec:probabilisticmodel} to the occurrences of mid-latitude cyclones and compare it with its special cases PP, FPP and CPP. 
This allows us to determine whether the IETs can be better described by the exponential, or by the Mittag-Leffler distribution, or by a mixture distribution with an exponential component, or by a combination of both. For a better overview, Table \ref{tab:2} summarizes the four models with the corresponding IET distributions and unknown parameters. 

\begin{table}[ht]
\begin{tabular}{lrr}
\hline
\textbf{model} & \textbf{IET distribution} & \textbf{unknown parameters} \\
\hline
PP             & exponential               & $\sigma$                    \\
FPP            & Mittag-Leffler            & $\beta$, $\sigma$           \\
CPP            & mixed exponential         & $\theta$, $\sigma$          \\
FCPP           & mixed Mittag-Leffler      & $\beta$, $\theta$, $\sigma$   
\end{tabular}
\caption{Summary of the four IET models.}
\label{tab:2}
\end{table}

We use the same data source and method for identifying extreme cyclones as \citet{Blender2015}. There are some differences, as data for a longer time period starting in 1940 with a higher horizontal resolution are available now.
In addition, we classify all locations using a parametric bootstrap. For this, we generate B = 100 independent bootstrap samples for each location from the data-generating process fitted to the real data (for more details see the previous section on simulations), and we re-estimate the parameters using CMmod. If the tail parameter $\beta$ is estimated to be smaller than one in 95\% or more of the samples, then we assume $\beta < 1$ to be true; similarly, we assume $\theta < 1$ to be true if 95\% of the sample estimates $\theta$ are smaller than one.

Our results focus on the two parameters $\beta$ and $\theta$ since they capture the clustering behaviour. All three models (see Figure \ref{fig:exm-1}) fit well to the general pattern that serial clustering occurs at the exit region of storm tracks to the west of Europe, while they occur more regularly at the American east coast (\cite{Dacre2020}). Moreover, the mountains in southern Europe seem to have a large influence on the IET distribution, as we find the strongest clustering behaviour regarding both parameters $\beta$ and $\theta$ there. 

\begin{figure}[!ht]
    \centering
    \includegraphics[width=1\textwidth]{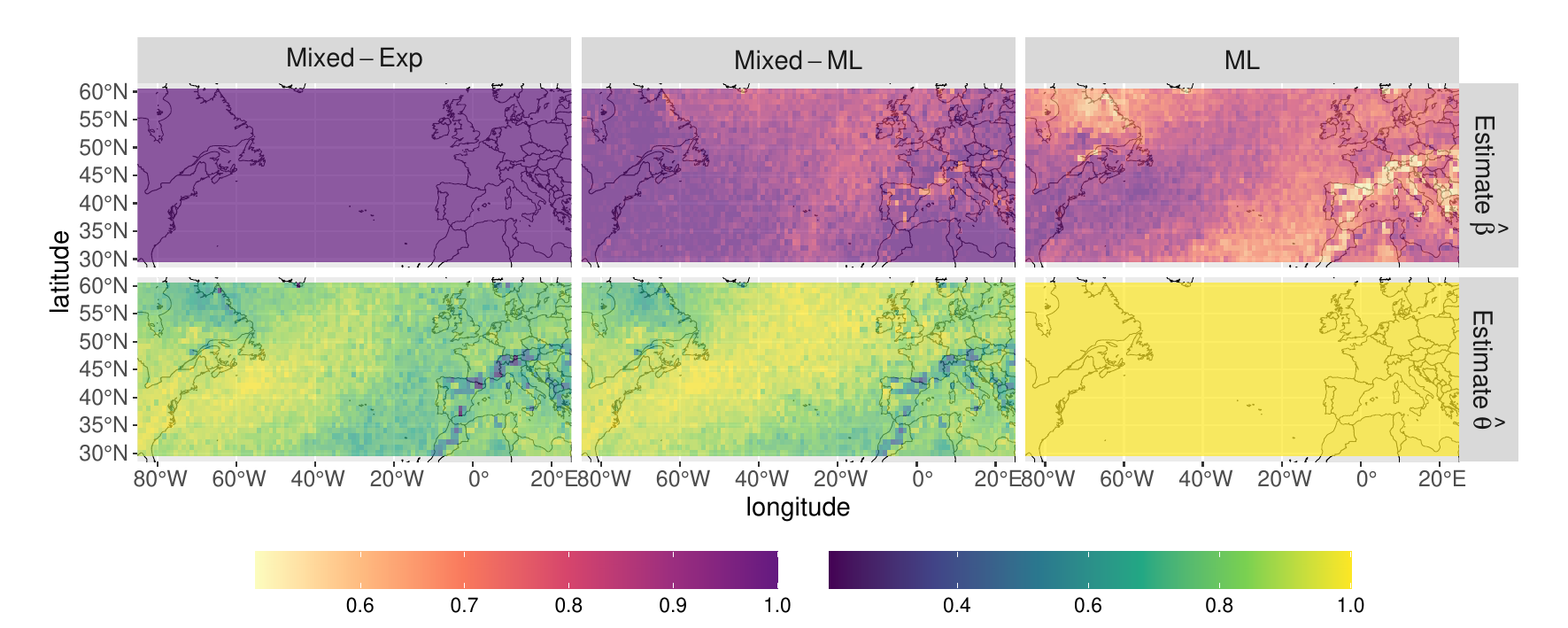}
    \caption{Estimations of tail parameter $\beta$ and extremal index $\theta$ in case of the CPP (left), the FCPP (middle) and FPP (right). The tail parameter is equal to 1 in the CPP, while this applies to $\theta$ in the FPP.}
    \label{fig:exm-1}
\end{figure}
Comparing the results for the FCPP and the FPP model (see Figure \ref{fig:exm-1}, middle and right column), we see that the tail parameter $\beta$ is generally estimated larger in the FCPP. Except for the storm track exit region over the north Atlantic and European mountain areas the tail parameter $\beta$ is estimated mostly close to one, while the FPP model suggests lower values of $\beta$. The reason for the difference between the results for the FCPP and FPP model is that the FCPP is more flexible and explains the serial clustering via both effects, the mixture component and heavy tails, so that a larger value of $\beta$ is compensated in the FCPP model by a extremal index $\theta$ less than 1 in these regions.

\begin{figure}[!t]
    \centering
    \includegraphics[width=0.5\textwidth]{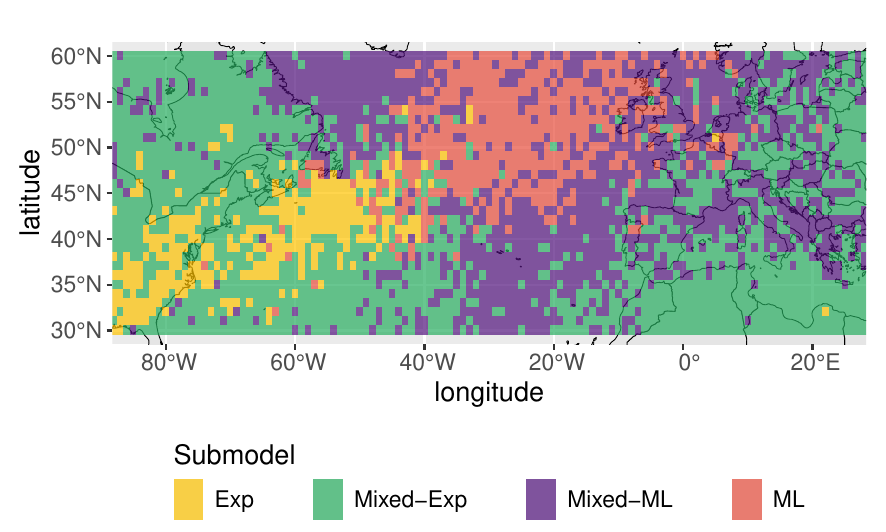}
    \caption{Bootstrap classification of extreme mid-latitude cyclones into the four (sub)models.}
    \label{fig:exm-3}
\end{figure}

When comparing the results of the FCPP model to the CPP model (see Figure \ref{fig:exm-1}, middle and left column), we observe fewer differences concerning the extremal index $\theta$. This is due to the tail parameter $\beta$ being estimated to be close to one at most locations, which puts us in the special case of the CPP. In regions where $\beta$ is estimated to be clearly less than one, the estimate of $\theta$ is higher than in the CPP model.
The results of the bootstrap classification (see Figure \ref{fig:exm-3}) underline these results.

To conclude this analysis, the three illustrative examples considered in Section \ref{sec:cyclones} are analysed further. The three locations have been selected in such a manner that Location A corresponds to the CPP submodel, Location C to the FPP submodel, and Location B to the general FCPP model according to the bootstrap classification.
One arguable assumption for all considered models is the underlying uncoupled marked point process, i.e., the stochastical independence of the event magnitudes and waiting times. We consider the empirical copula plot (see Figure \ref{fig:exm-7}) as diagnostic tool to investigate the dependence between the exceedances $X_i(u)$ and the IETs $T_i(u)$ as proposed in \citet{Hees2021}. The graph gives no indication of a possible dependence. Instead, it shows that for Location A and B, for both of which $\theta < 1$ was estimated, there is indeed an accumulation of consecutive exceedances.

\begin{figure}[!h]
    \centering
    \includegraphics[width=1\textwidth]{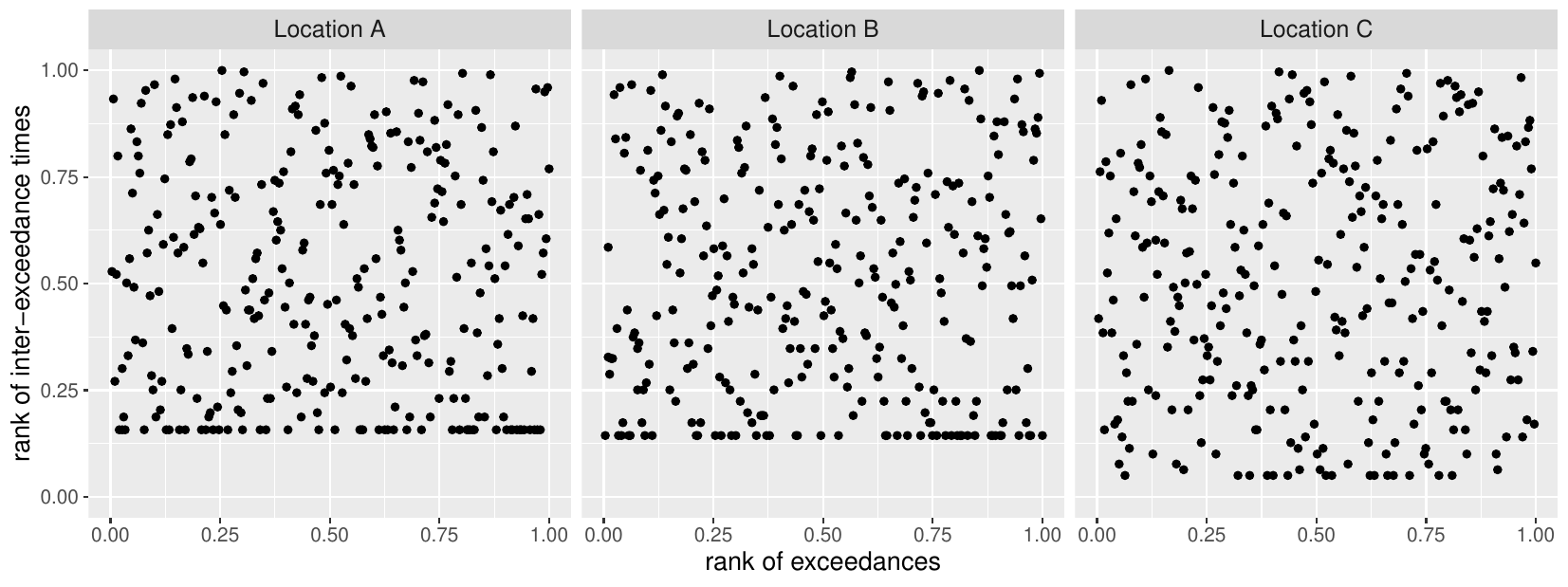}
    \caption{Empirical copula plot for the exceedances and the IETs.}
    \label{fig:exm-7}
\end{figure}

Table \ref{tab:1} shows the parameter estimations of the CMmod approach for Locations A, B and C along with estimated standard errors by a parametric bootstrap procedure as proposed in the last paragraph of Section \ref{sec:simStudy}. For the CPP and FPP we assume $\beta = 1$ or $\theta = 1$ to be known true, respectively, and optimise only about the remaining parameters.

\begin{table}[htb]
	\centering
	\caption{Parameter estimations for the three Locations A, B and C using CMmod. Marked in grey are the submodels that are equal to the estimation values of the FCPP. Second rows report the estimated standard errors by a parametric bootstrap procedure.}
	\label{tab:1}
	\begin{tabular}[h]{r | cc | cc | cc}
		\toprule
		& \multicolumn{2}{c}{Location A} &  \multicolumn{2}{c}{Location B} &  \multicolumn{2}{c}{Location C} \\
		& $\beta$ & $\theta$ &  $\beta$ & $\theta$ &  $\beta$ & $\theta$ \\
		\midrule
		FCPP & 1.00 & 0.83 & 0.88 & 0.84 & 0.77 & 1.00 \\
		& (.020) & (.031) & (.035) & (.030) & (.025) & (.010)  \\
		CPP & {\color{gray}1.00} & {\color{gray}0.83} & 1.00 & 0.76 & 1.00 & 0.74 \\
		& & (.033) & & (.034) & & (.032) \\
		FPP & 0.81 & 1.00 & 0.72 & 1.00 & {\color{gray}0.77} & {\color{gray}1.00} \\
		& (.027)  &  & (.028) & & (.026) &\\
		\bottomrule
	\end{tabular}
\end{table}

Figure \ref{fig:exm-6}, top row, shows histograms of the three locations with a bar width of one day. The
fitted densities of the three models CPP, FCPP and FPP are included for comparison.
We observe differences between the locations concerning the parameter estimation of $\beta$ and $\theta$. At Locations B and C $\beta$ is estimated to be less than one. The right tail of the distribution is apparently heavier there than at Location A, where $\beta$ is estimated to be equal to one. At Locations A and B, $\theta$ is estimated to be less than one. This is due to the high number of very small IETs that do not exceed one day.
The middle row of Figure \ref{fig:exm-6} shows QQ plots. Especially the upper quantiles of Location C, adjusted for FCPP, seem to fit worse than those of CPP. To answer the question of why FCPP did not estimate $\beta = 1$ and $\theta < 1$ in this case, we examine the same QQ plots and focus on the small IETs (see Figure \ref{fig:exm-6}, bottom row). Here, we can see that FCPP describes the empirical quantiles at all three locations best. It is noteworthy that over 90\% of the IETs at all three locations are shorter than 100 days. Thus, the CMmod estimation method gives more weight to smaller observations in the estimation. 
\begin{figure}[!t]
    \centering
    \includegraphics[width=1\textwidth]{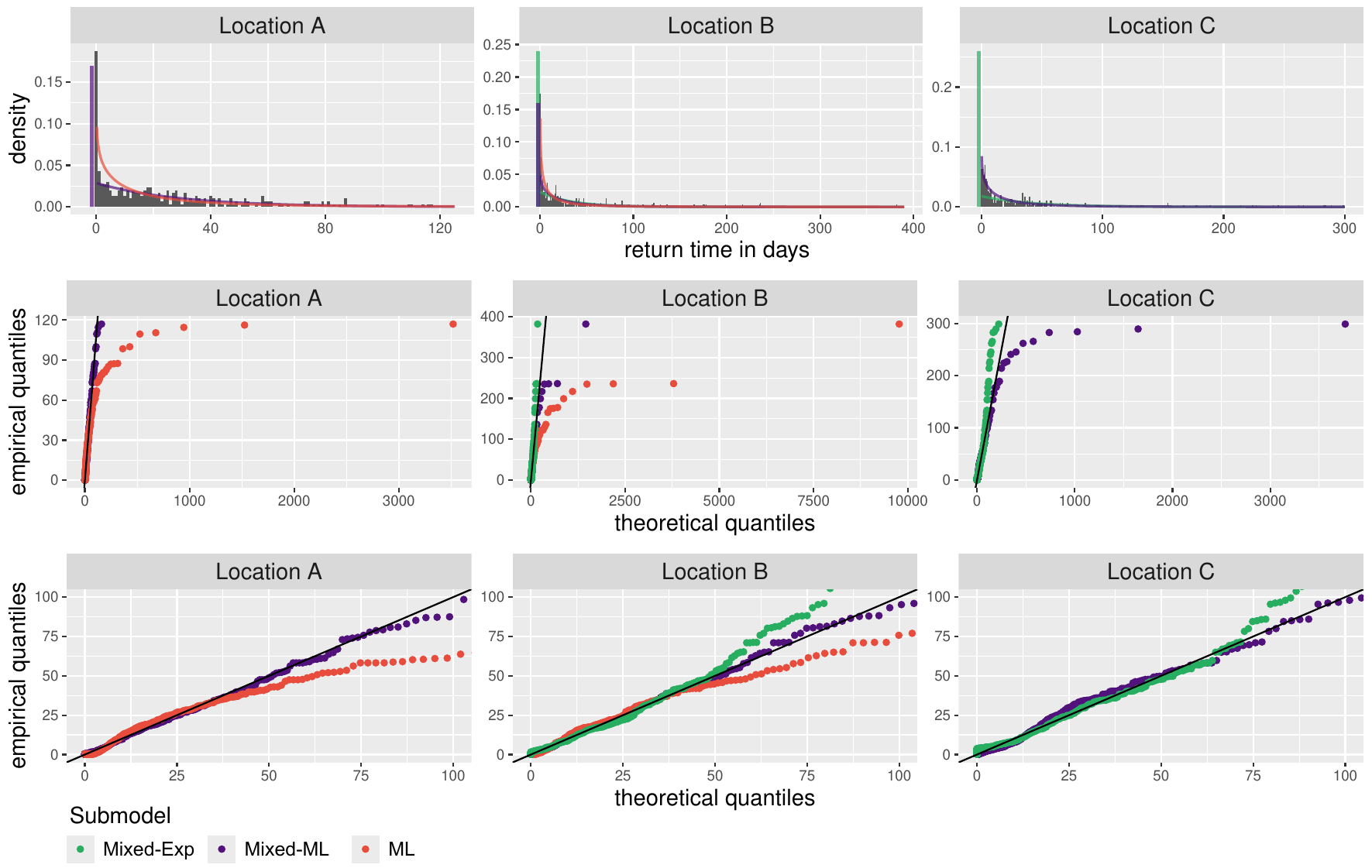}
    \caption{Histogram (top) and QQ plots (middle, bottom) of the IETs of Locations A, B and C with densities and theoretical quantiles, respectively, fitted using the mixed exponential, Mittag-Leffler and mixed Mittag-Leffler distribution. At the bottom row the QQ plot focuses on small IETs up to 100 days.}
    \label{fig:exm-6}
\end{figure}

Lastly, we examine the impact of the various models on the probabilities associated with the IETs using Location B as illustrative example.
Table \ref{tab:datasample2} presents the probabilities of the IETs not exceeding 1, 2, 7, 30, 100, and 365 days, respectively.

\vspace{1.5em}
\begin{table}[htb]
	\centering
	\caption{Estimated probabilities (in percent) of IETs not exceeding 1, 2, 7, 30, 100, and 365 days at Location B for the four (sub)models.}
	\label{tab:datasample2}
	\begin{tabular}[h]{r | rrrrrr}
		\toprule
		t (days) & 1 & 2 & 7 & 30 & 100 & 365  \\
		\midrule
		FCPP & 17.67 & 18.80 & 23.60 & 39.48 & 65.86 & 91.94 \\
		CPP & 24.58 & 25.15 & 27.96 & 39.58 & 64.62 & 95.34 \\
		FPP & 4.86 &  7.85 & 17.96 & 41.32 & 67.38 & 87.78 \\
		PP & 1.23 &  2.45 &  8.32 & 31.08 & 71.08 & 98.92 \\
		\bottomrule
	\end{tabular}
\end{table}
\vspace{1em}

Since we do not know the truth, we cannot conclusively determine which model is nearest to it. However, we observe large differences between the estimates, particularly for the very short times. For intermediate IETs (100 days), the estimates from the four models are similar, but they differ again for even longer times. The compound models, CPP and FCPP, in particular predict a high probability, around 19\% and 25\% respectively, of another extreme event in the next two days. In contrast, the FPP estimates this probability to be below 8\%. The probability of another event in the next 7 days is estimated to be about 18\% or larger in the three models which take clustering into account.

In general, our findings provide evidence supporting the hypothesis of \citet{Blender2015} that the deviation of the dispersion from zero in the exit region of the storm tracks can be described, at least in part, by the Mittag-Leffler distribution. Beyond that, our results indicate that using a mixture model often provides even better fits. The tail parameter might be estimated too small otherwise, pretending a too heavy distributional tail.

\section{Conclusion}
\label{sec:conclusion}

Extremes above a high threshold often occur in temporal clusters, i.e., several extreme values occur in a short period of time, followed by a longer period without such extremes. The inter-exceedance times (IETs) are then poorly described by an exponential distribution derived from a Poisson process. There are several asymptotical modelling approaches to capture deviation from exponential return times. One of them is the compound Poisson process (CPP) which corresponds to a mixture distribution of the Dirac measure at zero and an exponential distribution (\cite{Ferro2003}). Another model is the fractional Poisson process (FPP) where the IETs are asymptotically Mittag-Leffler distributed (\cite{Hees2021}) with tails heavier than those of the exponential distribution. 
In the present work we have combined these two approaches. Relaxing the conditions for the classical Poisson process results into both directions, we consider events that are stationary and  separated by heavy-tailed waiting times. Asymptotically the IETs then follow a fractional compound Poisson process (FCPP), which corresponds to a mixture distribution of the Dirac measure at zero and a Mittag-Leffler distribution. This model has three parameters, namely the tail parameter $\beta$, the extremal index $\theta$ and the scaling parameter $\sigma_{p(u)}$. The CPP and the FPP correspond to the special cases  $\beta = 1$ and $\theta = 1$, respectively. 
 
For estimating these three parameters we propose CMmod, a minimum distance approach based on a modification of the Cramér-von Mises distance. Our simulation study illustrates the suitability of the CMmod estimation, although the bias and RMSE are slightly higher in case of a low extremal index $\theta$ and Mittag-Leffler and exponentially distributed waiting times compared to the other scenarios we considered. 
In the special cases $\beta = 1$ and $\theta = 1$ it performs competitively and sometimes even better than common estimation methods designed for these scenarios.  In our simulations and real data analysis a parameter which was not needed for describing the data was often estimated to be equal or close to 1.
Thus, there seems to be little disadvantage when fitting the more general FCPP model, except for the longer computing time.
We thus do not need to decide in advance which model for clustering provides the better description of the data.

In our application to mid-latitude winter cyclones we have illustrated that the IETs occur in clusters and are poorly described by the exponential distribution. We have seen that different (sub)models  provide the best fit to the data depending on the exact location (off-shore in the Atlantic, west shore of Europe, interior of the continent, in the mountains, etc.). Simulations confirm that the proposed bootstrap tests offer a conservative approach for classifying whether fitting the FCPP model is required or if a submodel suffices for the data.
This study has not addressed the issue of seasonality and trends, such as those potentially driven by climate change, although increasing storm intensity could be expected as global temperatures rise (\cite{Karwat2022}).

\section*{Acknowledgements}
This research has been funded by the German Federal Ministry of Education and Research (BMBF) within the subproject SCAHA (project number 01LP1902K) of the research network on climate change and extreme events (climXtreme). The authors gratefully acknowledge the computing time provided on the Linux HPC cluster at TU Dortmund University (LiDO3), partially funded in the course of the Large-Scale Equipment Initiative by the German Research Foundation (DFG) as project 271512359. The authors would like to thank Prof. Richard Blender and Dr. Alexia Karwat for helpful discussions and insights on meteorological topics, especially mid-latitude winter cyclones.

\section*{Data availability statement}
The R-Code used for the simulation study will be provided as supplement. The ERA5 reanalysis data are openly available in the Climate Data Store of the ECMWF at \url{https://cds.climate.copernicus.eu/cdsapp/#!/search?type=dataset}.

\emergencystretch=1em 
\setlength\bibitemsep{5pt}
\printbibliography

\end{document}